\newcommand{\eps}{\epsilon}
\newcommand{\E}{\mathbb{E}}
\newcommand{\R}{\mathbb{R}}
\newcommand{\Bc}{\mathcal{B}}
\newcommand{\Dc}{\mathcal{D}}
\newcommand{\Sc}{\mathcal{S}}
\newcommand{\bs}{\boldsymbol}
\newcommand{\mc}{\mathcal}
\newcommand{\mb}{\mathbb}
\newcommand{\diag}{\mathrm{diag}}
\newcommand{\dpara}{a}
\newcommand{\comp}{\zeta}
\newcommand{\src}{s}
\newcommand{\buy}{b}
\newcommand{\Amat}{\Xi}
\newcommand{\buyers}{aggregators}
\newcommand{\buyer}{aggregator}
\newtheorem{assumption}{Assumption}
\newtheorem{definition}{Definition}
\newtheorem{lemma}{Lemma}
\newtheorem{theorem}{Theorem}
\newtheorem{corollary}{Corollary}
\newcommand{\PoA}{\mathrm{PoA}}
\newcommand{\Z}{\mc{Z}}
\title{\LARGE \bf
Competitive Statistical Estimation with Strategic Data
Sources}
\author{Tyler Westenbroek,~\IEEEmembership{Student Member,~IEEE}, Roy Dong,\\ Lillian J.
    Ratliff,~\IEEEmembership{Member,~IEEE}, and S. Shankar
    Sastry,~\IEEEmembership{Fellow,~IEEE}
\thanks{T.~Westenbroek and S.~S.~Sastry are with the Department of
    Electrical Engineering and Computer Sciences, University of California,
    Berkeley, Berkeley, CA, 94707, USA, e-mail:~{\tt\footnotesize
        $\{$westenbroekt,sastry$\}$@eecs.berkeley.edu}.}
\thanks{R.~Dong is with the Department of Electrical and Computer Engineering, University of Illinois at Urbana-Champaign, Champaign, IL, 61820, USA, e-mail:~{\tt\footnotesize
        roydong@illinois.edu}.}
\thanks{L.~J.~Ratliff is with the Department of Electrical Engineering,
            University of Washington, Seattle, WA, 98195, USA,
            e-mail:~{\tt\footnotesize ratliffl@uw.edu}.}
\thanks{This work is partially funded by NSF CNS:1656873.}%
}
\begin{document}

\maketitle
\thispagestyle{empty}
\pagestyle{empty}


\begin{abstract}
In recent years, data has played an increasingly important role in the economy as a good in its own right. 
In many settings, data aggregators cannot directly verify the quality of the data they purchase, nor the effort exerted by data sources when creating the data. 
Recent work has explored mechanisms to ensure that the data sources share high quality data with a single data aggregator, addressing the issue of moral hazard. 
Oftentimes, there is a unique, socially efficient solution.

In this paper, we consider data markets where there is more than one data aggregator. Since data can be cheaply reproduced and transmitted once created, data sources may share the same data with more than one aggregator, leading to free-riding between data aggregators. 
This coupling can lead to non-uniqueness of equilibria and social inefficiency. 
We examine a particular class of mechanisms that have received study recently in the literature, and we characterize all the generalized Nash equilibria of the resulting data market.
We show that, in contrast to the single-aggregator case, there is either infinitely many generalized Nash equilibria or none. We also provide necessary and sufficient conditions for all equilibria to be socially inefficient. In our analysis, we identify the components of these mechanisms which give rise to these undesirable outcomes, showing the need for research into mechanisms for competitive settings with multiple data purchasers and sellers.
\end{abstract}

\section{Introduction}
\label{sec:intro}

\IEEEPARstart{D}{ata} has increasingly seen a role in the economy as an important good. As an input to machine learning algorithms, data can not only create new products and innovations, but also be used to redesign business strategies and processes. As the demand for data increases, we have seen the formation of data aggregators, who collate data for either use or resale. A fundamental information asymmetry arises between data aggregators and data sources: how can aggregators verify the quality of the data they purchase from data sources?

In particular, data sources often incur an effort cost to obtain high quality data. 
For example, devices require maintenance and upkeep to ensure accurate measurements, 
portable sensors need to use their limited energy resources to collect and transmit data, 
and human agents may need to be compensated to properly perform a desired task. 
As such, if a data aggregator wants a high quality data point, they must appropriately compensate the data source. Furthermore, this problem is complicated by the fact that the data aggregators cannot observe the effort exerted, and only the data received. As such, the payments must be calculated from the data sets alone, with no knowledge of the effort exerted or noise levels of data points. This problem has led to the design of a variety of mechanisms to ensure data sources provide quality data, which we will outline in more detail in Section~\ref{sec:background}.

The contribution of this paper is the study of the data market that forms when multiple data aggregators share the same pool of data sources. In particular, we note that data is {non-rivalrous}, in the sense that it can be cheaply copied and shared with multiple data aggregators. Since a data aggregator does not `consume' the good after purchasing it, data sources will have an incentive to share the same data with as many aggregators as are willing to pay.
We show that the non-rivalrous nature of data introduces a coupling between data buyers: when a data aggregator incentivizes a data source to produce high quality data, other data aggregators benefit. 
In particular, this coupling leads to undesirable properties of the equilibrium. In many single-aggregator formulations, equilibria are unique and there is no social inefficiency. In contrast, the multiple-aggregator case leads to a multiplicity of equilibria, and social inefficiencies across all equilibria.

The rest of this paper is organized as follows. In Section~\ref{sec:background}, we discuss the related literature and contextualize our contributions. In Section~\ref{sec:datam_formulation}, we introduce our model for data sources, data aggregators, and their interactions in the data market. In Section~\ref{sec:datam_results}, we characterize the generalized Nash equilibria in the data market, and identify necessary and sufficient conditions for social ineffiency. In Section~\ref{subsec:partial}, we extend the results to cases where data sources do not share their data with all data aggregators. 
Finally, we close with final remarks in Section~\ref{sec:datam_discussion}.

\section{Related Literature}
\label{sec:background}

In recent years, there has been a quickly growing body of literature on models for data exchange and data markets. 
Broadly speaking, the existing literature can be broken down by two categories: models with a single data purchaser and single data source, and models with a single data purchaser and multiple data sources.

In the first category, we find a class of models which study a single data purchaser and a single data source. These works focus on the game theoretic interactions and information states between the two agents. In particular, these works consider the strategies arising from direct signals, actions, and payments, rather than indirect coupling that can arise from multiple sources or purchasers. 
Some of these papers feature multiple data sources, but these are ultimately separable into a collection of single-source models, and, at their core, focus on the direct interactions between buyers and sellers of data.  In~\cite{babaioff2012optimal}, optimal mechanisms for a single data source to sell to a single buyer are developed using a signaling framework. 
The authors of \cite{bergemann2018design} design a menu of prices for different data qualities, employing a screening framework. 
In~\cite{dobakhshari2017reputation}, the authors consider a single {\buyer} and single source, and show how repeated interactions with noisy verification allow for mechanisms which elicit costly effort from a data source.  A single data source charging data purchasers for queries about customer preferences is studied in \cite{bergemann2015selling}. 

In the second category, there are a class of models which study a single data purchaser with multiple data sources. These works focus on capturing how the data supplied by one data source affects another. In~\cite{caragiannis2016truthful}, the authors consider a single data {\buyer} and multiple data sources, and show how robustness of the sample median provides protection against strategic data sources. 
In~\cite{chen2018optimal}, the authors consider a single data {\buyer} and multiple data sources in a setting with verifiable data, and allow the data and the cost of revealing data to be arbitrarily correlated.

There is also a new body of work in the single-{\buyer}, multiple-source case, using \emph{peer prediction} mechanisms, first introduced in~\cite{miller2005eliciting}. These techniques often use scoring techniques to evaluate the `goodness' of received data, and often examine classification tasks. 
In~\cite{prelec2004bayesian,faltings2014incentive}, the authors develop mechanisms for eliciting the truth in crowdsourcing applications, while ~\cite{dasgupta2013crowdsourced,radanovic2015incentive, shnayder2016informed} consider theoretical extensions to strengthen the original results of~\cite{miller2005eliciting}, all in the context of a single {\buyer}. 
In~\cite{liu2017machine}, the authors consider a classification problem with a single {\buyer} and multiple data sources, which extends the classic peer prediction results by exploiting correlations between the queries and query responses.

A parallel literature considers similar ideas in the regression domain. These works design general payment mechanisms, by which a central data {\buyer} may incentivize data sources to exert the effort necessary to produce and report readings which are deemed to be of high quality, with respect to the estimation task the {\buyer} is performing. The roots of these approaches can be traced least as far back as VCG mechanisms, a set of seminal results in mechanism design~\cite{vickrey:1961aa}. Indeed, numerous approaches for deciding payments based on the actions of other agents have been proposed~\cite{Nisan2007}. Here, we again see attention given to crowdsourcing \cite{chawla:2015aa}. 

Several recent papers~\cite{cai:2015aa,farokhi2015budget,Farokhi2015,dobakhshari2016incentive,Farokhi2017,dobakhshari2017reputation} investigate new directions in this domain. In cases where, without the ability to directly determine the effort exerted by data sources, data buyers must design incentive mechanisms based solely on the data available to them. In \cite{cai:2015aa}, whose approach we extend here, the authors develop a mechanism which a data aggregator can use to precisely set the level of effort a collection of data sources exert when producing data. A similar mechanism is explored in \cite{farokhi2015budget}. Extensions are considered wherein data sources form coalitions \cite{Farokhi2015}, or where aggregators assess the quality of readings using a trusted data source \cite{dobakhshari2016incentive}.  Meanwhile, \cite{Farokhi2017} and \cite{dobakhshari2017reputation} investigate dynamic settings where data sources are repeatedly queried. 

Our work is closest in spirit to the literature studying regression problems with multiple data sources, with our key contribution being the presence of multiple data {\buyers} that are coupled in their costs and actions. 
To our knowledge, this is one of the first papers which considers multiple data {\buyers} and multiple data sources simultaneously. In particular, we simultaneously model coupling between data {\buyers} in their cost functions, coupling in the payments to the same pool of data sources, and coupling between data sources due to payments that depend on their peers' data. 

We suppose all data {\buyers} are trying to estimate the same function and share the same pool of data sources. Additionally, we assume each data {\buyer} has already chosen an estimator, and now must determine how to issue payments to have low estimation error with their exogenously fixed estimator. Our model builds heavily on the model introduced in~\cite{cai:2015aa}, which featured a single data {\buyer}. Our contribution is an extension that models cases with multiple data {\buyers}. 
For consistency, we will refer to data purchasers as \emph{data {\buyers}}, and data sellers as \emph{data sources}. 

Furthermore, the work in the paper is a significant extension of our prior
work~\cite{westenbroek:2017aa} where we considered strategic data sources with a
specific exponential function mapping effort to query response quality. In the
present work, we characterize equilibria and
the price of anarchy for a much broader class of games between data buyers where
the data sources' effort functions can be any non-negative, strictly decreasing,
convex, and twice continuously differentiable function. The characterization we
provide considers both bounded and unbounded feasible effort sets for the data
sources.


\section{Data Market Preliminaries}
\label{sec:datam_formulation}

In this section, we outline the models for data sources, data {\buyers}, and the strategic interactions between them. 

At a high level, each data {\buyer} collects data from data sources to construct an estimate of a given function. In exchange for this data, the data {\buyer} issues incentives to the data sources. The data {\buyers} have three terms in their cost function:
1) an estimation error term, which rewards the data {\buyer} for constructing a better estimate; 2) a competition term, which penalizes when other data {\buyers} have higher quality estimates; 3) a payment term, which is the cost incurred issuing incentives.

Each data source is able to produce a noisy sample of the desired function. The data sources can exert effort to reduce the variance of the data sample, and we assume the data sources are \emph{effort-averse}, i.e.  data sources will prefer to exert less effort, unless they are provided incentive by the {\buyers}. As such, the data sources have two terms in their utility function: 1) an incentive term, which rewards payments received; 2) an effort term, which penalizes effort exerted.

The level of effort exerted and the variance of the data are not known by the data {\buyer}; this \emph{private information} gives rise to \emph{moral hazard}. One of the problems for the {\buyer} is the task of designing incentives which depend only on the information available to them. Another important nuance is that data is \emph{non-rivalrous}; thus, when a data source produces a higher-quality data sample, all the {\buyers} which receive this data benefit.

In order to simplify the initial introduction of our model, we will first assume that each data source provides data to all the {\buyers} in the data market, and receives payment from all {\buyers} as well. In Section \ref{subsec:partial}, we will outline how our results change when this assumption is removed. 

\subsection{Overview}

More formally, let $\mc{S} = \{1, \ldots, N\}$ be the index set of \emph{strategic data sources}, and let $\mc{B} = \{1,\ldots, M\}$ be the index set of \emph{strategic data {\buyers}}.
Each data {\buyer}
desires to construct an estimate for a given function $f \colon \mc{D} \to \R$, where $\mc{D}$ 
is a \emph{feature space}. 
Practically, one may think
of $\Dc$ as a set of features the data {\buyers} are capable of observing, while the mapping 
 $f$ encapsulates the relationship between the observable features and the outcome of interest.  

Each data
source $s \in \mc{S}$ is able to produce a noisy sample $y_s$ of $f$ at the
fixed point $x_s \in \mc{D}$. The point $x_s$ is common knowledge among all data sources and {\buyers}. 
The variance of $y_s$ is proportional to the
effort exerted by data source $s$ to produce the reading. Each data source $s$ is characterized by an \emph{effort-to-variance} function $\sigma_s^2 : \mc{E}_s \to \R_{\geq 0}$, where $\mc{E}_s$ represents the set of feasible efforts that data source $s$ can exert. When data source $s$ exerts effort $e_s \in \mc{E}_s$, they produce the data point:
\begin{equation}
\label{eq:data_sample}
y_s(e_s) = f(x_s) + \eps_s(e_s)
\end{equation}
Here, $\eps_s(e_s)$ is a random variable with mean $0$ and variance $\sigma_s^2(e_s)$. 
The function $\sigma_s^2$ is common knowledge among all data sources and {\buyers}. However, while the function $\sigma_s^2$ is known, the effort exerted $e_s$ is private. This means that the actual variance of $y_s$, namely $\sigma_s^2(e_s)$, is also private information of $s$. We will delve into assumptions in the data source model in greater detail in Section~\ref{sec:source_details}.

Now, suppose a data {\buyer} is granted access to a data set $\{(x_s,y_s)\}_{s \in S}$.  At this point, the data {\buyer} $b \in \mc{B}$ 
processes this data to construct an estimate for $f$. In exchange for this data set, the data {\buyer} issues payment $p_s^b(y)$ to data source $s$ for each $s \in \mc{S}$. Here, $y = (y_1,\dots,y_N)$ denotes the data 
given to each member of $\mc{B}$. Note that the payment to $s$ from $b$ depends not only on the data supplied by $s$, but rather depends on all data available to $b$. 

The data {\buyer} then incurs loss $L^b(p^b,e)$, which will depend on $p^b = (p_i^b)_{i \in \Sc}$, the payments issued, as well as $e = (e_i)_{i \in \Sc}$, the effort exerted by the data sources. 
We will formalize the data {\buyer} in greater detail in Section~\ref{sec:buyer_details}.

The interaction of the data market proceeds in three stages.

\begin{enumerate}

\item \emph{Aggregators declare incentives:} Each data {\buyer} $b \in \mc{B}$ commits to a payment contract $p^b = (p_i^b)_{i \in \Sc}$. The payments will depend on the data $y$ shared with $b$, as well as the common knowledge information $x = (x_i)_{i \in \Sc}$ and functions $\sigma^2 = (\sigma_i^2)_{i \in \Sc}$.

\item \emph{Sources exert effort, realize and share data:} In response to $p^b$, each data source $s$ chooses an effort $e_s \in \mc{E}_s$. Then, the random variable $y_s$ is realized according to~\eqref{eq:data_sample}. The data $y_s$ is shared with each data {\buyer}. 
Note that $\src$ has control over $y_s$ only through $e_s$. In other words, the data source chooses the quality of data they generate, but cannot arbitrarily manipulate the reported value of $y_s$. 

\item \emph{Aggregators construct estimates, issue payments:} Each data buyer $b$ constructs their estimate $\hat{f}^b$, issues payments $p^b$ to the data sources, and incurs loss $L^b$.

\end{enumerate}

For convenience, we include a table summarizing the notation throughout this paper in Table~\ref{tab:notation}.

\subsection{Strategic Data Sources}
\label{sec:source_details}

As mentioned previously, each data source $s \in \mc{S}$ has their own \emph{feature
vector} $x_s \in \Dc$, and samples the function $f$ at this point. We may also refer to $x_s$ as a \emph{query} throughout the text, and $y_s$ as the \emph{query response} for data source $s$. The data source $s$ is characterized by the \emph{effort-to-variance} function $\sigma_s^2 : \mathcal{E}_s \rightarrow \R_{\geq 0}$. We assume $0 \in \mc{E}_s$ so that each data source may exert no effort in producing her reading if she desires. 

\begin{assumption}
\label{ass:Eset}
For each $s \in \mc{S}$, the set $\mc{E}_s \subset \R_{\geq 0}$ is a closed, connected set and contains $0$.
\end{assumption} 
Assumption~\ref{ass:Eset} means that we consider two cases:
\begin{enumerate}
    \item[(i)] $\mc{E}_s = [0, \infty)$, i.e.~the data sources maximum allowed
    effort is unbounded.
\item[(ii)] $\mc{E}_s = [0,
    e_s^{\max}]$ for some $0<e_s^{\max}<\infty$, i.e.~the data sources maximum allowed
    effort is bounded.
\end{enumerate}
Imposing an upper-bound on the amount of a effort a data source can exert can be used to model constraints such as hardware limitations. As we shall see in Section \ref{sec:datam_results}, the imposition of such constraints can drastically affect equilibrium behavior in the data market.

Once the data source $s$ exerts effort $e_s \in \mc{E}_s$, they produce the data point $y_s$ according to~\eqref{eq:data_sample}. 
Again, we note that the data source only controls the effort level $e_s$. They can only indirectly control $y_s$ through $e_s$, and cannot report arbitrary values as their data. We also impose the assumption that the noise in the data is independent across data sources.

\begin{assumption}
For each $s \in \mc{S}$, $\eps_s(e_s)$ is a random variable with mean $0$ and variance $\sigma_s^2(e_s)$. Furthermore, the random variables $\{\eps_s(e_s)\}_{s \in \mc{S}}$ are independent.
\end{assumption}

Both $x_s$ and the function function $\sigma_s^2$ are common knowledge, but the effort $e_s$ and $\sigma_s^2(e_s)$, the actual variance of $y_s$, are
private. 

For convenience, we let $\mathcal{E} = \mathcal{E}_1 \times \dots \times
\mathcal{E}_N$ be the joint effort set and let  $\sigma^2=(\sigma^2_1, \ldots,
\sigma_{N}^2)$ be the tuple of effort-to-variance functions. We make the following  assumptions on the effort-to-variance mappings
$\sigma^2$.

\begin{assumption}
\label{ass:sigma_form}
    For each data source $s\in \mc{S}$, the mapping
    $\sigma_s:\mathcal{E}_s\to \R_{\geq 0}$, which is the square root of
    $\sigma^2_s$, is
    (i) strictly decreasing,
(ii) convex, and (iii) twice continuously differentiable. 
\end{assumption}

The assumptions correspond to the variance of the estimate generated by data source
$s$ decreasing in the effort exerted, with
decreasing marginal returns. 

Using the notation $p_s = (p_s^j)_{j \in \Bc}$, we model each data source with the following utility function:
\begin{equation}
\label{eq:effort_selection}
\textstyle u_s(e_s, p_s)=
    \E  \Big( \sum_{j \in \mc{B}} p_s^j( y(e)) \Big) - e_s
\end{equation}
where the expectation is with respect to the randomness in $y$, the data generated by the data sources upon exerting effort $e$.\footnote{For simplicity and as a first-step analysis, we assume that the data sources only care about the payments received from the {\buyer}, and are indifferent to which {\buyers} they share their data with. An interesting and practical extension would be to consider the case where the data sources' utility functions are {\buyer}-dependent. This could arise when data sources trust different {\buyers} differently, or over privacy concerns.} Note the form of~\eqref{eq:effort_selection} implies that the data sources are risk-neutral and effort-adverse.  Additionally, the form of~\eqref{eq:effort_selection} also implies the effort $e_s$ can be normalized to be comparable to the payments. We note that the timing of the game implies that data sources must commit to an effort level ex-ante.

Thus, in the second stage of the game, data source $s$ has knowledge of the payment contracts $(p^b)_{b \in \mc{B}}$, and chooses $e_s$ to maximize their $u_s(e_s,p_s)$, defined by~\eqref{eq:effort_selection}. However, since the utility of each data source depends on the effort exerted by the other data sources, the payments $(p^b)_{b \in \mc{B}}$ induce a game between the data sources. In Section~\ref{subsec:sourcegame} we will fully characterize this game for the particular class of incentives we introduce in Section \ref{sec:payment_struct}.

\subsection{Strategic Data Aggregators}
\label{sec:buyer_details}

The primary objective of each {\buyer} is to construct a low-variance estimate for the function $f$. We adopt the following formal definition for an estimator. 
 
\begin{definition}[Estimator~\cite{cai:2015aa}]
    Let $\mc{H}$ be a family of functions $f:\mc{D}\to\mb{R}$. An estimator for $\mc{H}$ takes as input a
    collection $\mc{X}=(x_i,y_i)_{i=1}^N$ of examples $(x_i,y_i)\in \mc{D}\times
    \mb{R}$ and produces an estimated function $\hat{f}_{\mc{X}}\in \mc{H}$.
    \label{def:est}
\end{definition}

As an example,  $\mc{H}$
may be the class of linear functions $f:\R^n\rightarrow \R$, in which case one may produce an estimated function $\hat{f}_{\mc{X}}\in \mc{H}$ of $f$ via linear regression.

Each data {\buyer} $\buy\in \mc{B}$ constructs his estimate for $f$ from the class of functions $\mc{H}_b$, using the readings $\mc{X} = (x_s , y_s)_{s \in \mc{S}} $. We let $\hat{f}^\buy_{\mc{X}} \in \mc{H}_b$ denote the
estimate that {\buyer} $\buy$ constructs based on the readings they receive.\footnote{In general,
{\buyers} need not fit models of the same type---e.g., one data
{\buyer} may choose to generate their estimate via linear regression, while
another fits a polynomial of higher degree.  Different
estimator types across data {\buyers} may be used to encapsulate competitive
advantages one has over another.}

Each data {\buyer}'s estimator is given, fixed, and common knowledge among all agents. In other words, this means that, for each data {\buyer}, the process by which a data set is turned into an estimate is exogenous. We focus on the design of incentives once each buyer has chosen an estimator.

First, we introduce some restrictions on the class of estimators allowed. 
The following assumption is required for us to be able to consider the contribution of data source $s$ to reducing {\buyer} $b$'s estimation cost. Also, note that the functions $\{h_b\}_{b \in \mc{B}}$ will be non-negative by construction.
\begin{assumption}
\label{ass:estimator_sep}
We assume the estimator for each $b \in \mc{B}$ is {separable}, in the following sense~\cite{cai:2015aa}. There exists a function $h_b$ such that
for all
queries ${\bf x}$, distributions $F$ over
$\mathcal{D}$, and variances ${\sigma}^2$ 
of the reported estimates ${\bf y}$ at queries ${\bf x} = (x_i)_{i = 1}^k$ in the dataset $\mc{X} = ({\bf x},{\bf y})$:
\begin{equation}
\label{eq:h_def}
\E \left[ \big(\hat{f}^\buy_{\mc{X}}(x^\ast) - f(x^\ast) \big)^2 \right] = \sum_{i = 1}^k h_b(x_i,{\bf x},F) \sigma_i^2
\end{equation} 
Here, the expectation is taken across the randomness in $\mc{X}$, as well as across $x^\ast \sim F$.
\end{assumption}
For brevity, we will also define the function $g_\buy$ as follows:
\begin{equation}
\label{eq:g_def}
 \textstyle   g_\buy({\bf x}, F, {\sigma}^2) = \sum_{i = 1}^k h_b(x_i,{\bf x},F) \sigma_i^2
\end{equation} 

Let  $-\buy=\mc{B}\setminus\{\buy\}$ denote the index set of {\buyers}
excluding $\buy$ and let 
$p^{-\buy}=(p^j)_{j\in -b}$ be the payments of all
{\buyers} excluding $\buy$. 
Aggregator $\buy$ constructs
 payments so as to minimize:
\begin{align}
\label{eq:buyer_obj}
\textstyle L^\buy({p^\buy}, e)  &=\textstyle
\E \Big[ 
    \big( \hat{f}^\buy_{\mc{X}}(x^\ast) - f(x^\ast) \big)^2 \notag \\
    & \textstyle \quad -\sum_{j \in -\buy} \comp_j^\buy \big( \hat{f}^j_{\mc{X}}(x^\ast) - f(x^\ast) \big)^2 \\
    & \textstyle \quad + \eta^\buy \sum_{\src \in \mc{S}} p_\src^\buy(y(e)) \notag
\Big]
\end{align}
As in~\eqref{eq:h_def}, the expectation in~\eqref{eq:buyer_obj} is taken with respect to $x^\ast \sim
F_\buy$ and the randomness in the query responses $y$. 
The distribution $F_\buy$ weighs the importance data {\buyer} $\buy$ places on accurately
estimating $f$ for different query points $x \in \mathcal{D}$. 

The scalars $\comp_j^\buy \in [0 ,1]$ parameterize the level of competition between
{\buyers} $\buy$ and $j$. 
When $\comp_j^\buy =0$, {\buyer} $\buy$ is indifferent to the
success of $j$'s estimation; $\buy$ interacts with $j$ entirely through the incentives issued to the data sources. We note that, even when $\comp_j^\buy = 0$ for all $j$ and $b$, we can still see degeneracies and social inefficiency arise, since data aggregators will still be coupled through the data sources.\footnote{This is a stylized formulation of how competition can affect different data {\buyers}, but we see interesting results arise even in this simple model. In the future, we hope to consider more extensive models of competition for data {\buyers}.} The parameter $\eta^\buy > 0$ denotes a conversion between dollar amounts allocated by the payment functions and the utility generated by the quality of the various estimates that are constructed. We make the assumption that {\buyer} $\buy$ has knowledge of what estimator every other data {\buyer} plans to use, as well as the weighting distributions.\footnote{This is a fairly strong assumption given that competing data {\buyers} are unlikely to inform their competitors how they intend to process the data supplied by the sources. Our work isolates how coupling between {\buyers} through data sources affect the data market; an interesting avenue for future work is to consider extensions with different information sets, and characterize the existence and severity of market inefficiencies in these various situations.}

\subsection{Structure of Payment Contracts}
\label{sec:payment_struct}

Throughout this paper, we will assume a particular form for the payment contracts the {\buyers} offer to the data sources. Similar to previous notation, we let $-s = \mc{S} \setminus \{s\}$. For a given $b \in \mc{B}$ and $s \in \mc{S}$ we assume that $p_s^b$ is of the form:
\begin{equation}
    p_\src^\buy({y}^\buy) = c_\src^\buy - \dpara_\src^\buy \left(
    y_\src^\buy -
    \hat{f}^\buy_{\mc{X}_{-\src}}(x_\src) \right)^2
    \label{eq:paystructure}
\end{equation}
Here, $a_s^b$ and $c_s^b$ are nonnegative scalars. Also,
$\mc{X}_{-\src}=(x_{-\src}, y_{-\src})$ denotes $b$'s data set excluding $s$. Namely $x_{-s}$ is the data features for 
all sources excluding $\src$ and $y_{-\src}=(y_i)_{i\in -s}$ is the query responses to {\buyer} $\buy$, excluding $s$.

    Note that these payments do not directly depend on the level of effort that any
of the data sources exert, since the data {\buyers} do not have a means to
directly observe these values. Rather, the payment to source $s$ from {\buyer}
$b$ depends on the $b$'s best estimate for $f(x_s)$ excluding $s$'s data, namely, $\hat{f}^\buy_{\mc{X}_{-\src}}(x_\src)$. 
The payments only depend on the data reported to them, and can be calculated by the {\buyer}.

Similar payment contracts are common in the literature~\cite{cai:2015aa,farokhi2015budget,dobakhshari2016incentive}, in part because of their intuitive structure. The {\buyer} constructs an unbiased estimate of what data source $s$ \emph{should} report, and this estimate is not influenced by the data of $s$. This estimate is used to overcome the problem of moral hazard: all data sources are appropriately incentivized to reduce the variance of their reported data accordingly.

Given this payment structure, each data {\buyer}'s choice of payment 
contracts reduces to choosing parameters 
$(c^\buy, a^{\buy})$ where $c^\buy =(c_i^{\buy})_{i\in\mc{S}}\in \mb{R}^{N}$ and $a^\buy=(a_i^\buy)_{i\in \mc{S}}\in \mb{R}^N$.

In the single {\buyer} case (when $M=|\mc{B}|=1$), it was shown in~\cite{cai:2015aa} that payments of the form in~\eqref{eq:paystructure} induce
a game between the data sources for which there is a unique dominant strategy 
equilibrium. That is, for each collection of parameters $(c_i^b)_{i \in \mc{S}}$ and
$(a_i^b)_{i \in \mc{S}}$, the data sources each exert a unique level of effort. The authors develop and algorithm by which the single {\buyer} may select these parameters such that
(i) data sources are incentivized to exert any level of effort that the
    {\buyer} desires, and 
(ii) data sources are compensated at exactly the value of their effort, i.e.~$\E [p_\src(y(e))] = e_\src$.

This paper's contribution is the study of how pricing schemes of this form
perform in the more general case where there is more than one data
{\buyer} (when $M = |\mathcal{B}| > 1$), and data {\buyers} may compete with each other. The goal is to model 
multiple {\buyers} as
strategic decision-makers in competition, and understand the \emph{data market} where these agents interact. Thus, while prior work captured moral hazard, we extend this model to capture competition and the non-rivalrous nature of data.

\subsection{Formulation of Aggregator Optimization Problem}
\label{subsec:aggform}

As mentioned previously, the {\buyers} hope to minimize their costs, as given in~\eqref{eq:buyer_obj}. They do so by choosing the parameters $(c^b, a^b)$. In this section, we will describe the {\buyer}'s optimization problem in more detail, and specify constraints that the parameter choice must satisfy.

The first constraint is \emph{individual rationality (IR)}. Individual rationality requires that 
each data source's utility is non-negative {ex-ante}~\cite{bolton:2005aa}.\footnote{Alternatively, a data
source's utility may be compared to an outside option; for
simplicity, we model the outside option as having zero utility.} This ensures that rational data sources are willing to exert effort to produce the data. The second constraint is non-negative payments from each data {\buyer}. Given that there are multiple {\buyers}, we introduce a constraint that the payment each {\buyer} offers to each ${\src}$ is non-negative {ex-ante}.\footnote{Negative payments could be
handled via exchangeable utilities among the data {\buyers} or via a trusted third--party to manage the allocations; however, in an
effort to ensure clarity, we leave these scenarios aside.}

We'll introduce some notation for brevity here; we let $\mathrm{p}_\src^\buy$ denote the expected value of the payment $p_s^b$:
\begin{align}
\mathrm{p}_\src^\buy& ( (c_\src^b, \dpara_\src^b), e ) 
= \E [ p_s^b(y) ] \notag \\
& =
c_\src^\buy-\dpara_\src^\buy\big(
\sigma_\src^2(e_\src) 
+ g_\buy(x_{-\src}, \delta_{x_\src},
\sigma^2_{-\src}(e_{-\src}))\big)
\label{eq:exppay}
\end{align}
where $\delta_x$ denotes the probability measure with mass one at $x$ and $e = (e_i)_{i \in \Sc}$. Similar to previous conventions, we define:
\begin{equation*}
    \mathrm{p}_\src((c_\src,\dpara_\src),e)=\textstyle\sum_{b\in
        \mc{B}}\mathrm{p}_\src^\buy((c_\src^b,\dpara_\src^b),e)
    \label{eq:sumtotal}
\end{equation*}
Thus, the IR constraint for each data source $s$ is formalized:
\begin{equation}
\label{eq:IR_conditions}
 \mathrm{p}_\src((c_\src,\dpara_\src),e)
 \geq e_\src
\end{equation}
Similarly, the non-negativity constraint for each data source $s$ and data {\buyer} $b$ is given by:
\begin{equation}
\label{eq:IR_conditions2}
\mathrm{p}_\src^\buy((c_\src,\dpara_\src),e)\geq 0
\end{equation}

The third constraint is \emph{incentive compatibility (IC)}. 
Intuitively, IC states that when a data source is acting rationally and choosing actions to maximize their utility, they behave as the data {\buyers} intended. 
When there is a single {\buyer}, IC is typically
enforced by the {\buyer} finding the effort that minimizes their cost, $e^\ast_\src$, and then
designing $p_\src$ such that
$e^\ast_\src = \arg\max_{e_s\in \mc{E}_\src} \mathrm{p}_\src((c_s,a_s),e)-e_s$.\footnote{For notational brevity, we will use $\arg\max$ as a function rather than a set-valued function throughout this paper; this is well-defined by Assumption~\ref{ass:sigma_form}.}

In the competitive
setting, IC for one aggregator is defined holding all other
aggregators payments fixed. Each of the data {\buyers} make their choice of payment subject to the
fact that data source 
$\src$ selects effort according to
\begin{equation}
\textstyle\max_{e_\src \in \mathcal{E}_\src} \sum_{j \in \Bc} \mathrm{p}_s^j((c_s^j,a_s^j),e) - e_\src
    \label{eq:IC}
\end{equation}
Note that the payment each source receives depends on the efforts exerted by the other data sources. Thus, for each set of contracts offered by the {\buyers}, a game is induced between the data sources to determine how much effort they will exert. The {\buyers} compete by issuing incentives, which influences the equilibrium behavior of this game. 

From the perspective of the data {\buyers}, the IC constraint states the desired effort level $e_s^*$ must be a \emph{dominant strategy} for data source $s$; that is, $e_s^*$ is the utility-maximizing action for $s$ regardless of the actions taken by other sources $-s$. Formally, the following must hold for all $e_{-s} \in \mc{E}_{-s}$:
\begin{equation*}
e_\src^\ast= \arg\max_{e_\src \in \mathcal{E}_\src}
\mathrm{p}_\src((c_\src,\dpara_\src),(e_\src, e_{-\src}))-e_\src
\end{equation*}

With these constraints, we formulate a bilevel optimization problem for each {\buyer}. Consider a fixed {\buyer} $b \in \mc{B}$. Given a fixed action profile for all other buyers $-b$, i.e.~given $(c^{-b},a^{-b})$, {\buyer} $b$ aims to solve:
\begin{align}
    \min_{(c^\buy, \dpara^\buy)} &\ L^\buy( (c^\buy,
    \dpara^\buy), (c^{-\buy}, \dpara^{-\buy}) ) \notag \\
\text{s.t.}\ \ &\ \
e_\src^\ast= \arg\max_{e_\src \in \mathcal{E}_\src}
\mathrm{p}_\src((c_\src,\dpara_\src),(e_\src, e_{-\src}))-e_\src,\notag\\
&\ \ \quad \forall
e_{-\src}\in \mc{E}_{-\src}, \ \forall s\in \mc{S}
\notag \\
&\ \mathrm{p}_\src((c_\src,\dpara_\src),(e^\ast_\src, e_{-\src}^\ast))\geq
e_\src^\ast,\ \forall s\in \mc{S}
\notag \\ 
&\  \mathrm{p}_\src^\buy((c_\src,\dpara_\src),(e_\src^\ast, e_{-\src}^\ast))\geq
0, \ \forall s\in \mc{S}
\notag \\
&\ c_\src^\buy \geq 0, \dpara_\src^\buy \geq 0, \ \forall s\in \mc{S} \notag
\end{align}
where $L^\buy$ is defined in~\eqref{eq:buyer_obj}. 

Note that this problem actually has $N$ optimization problems as constraints, making is a difficult bilevel program. However, we will reformulate the {\buyer}'s problem to a more manageable non-linear program in the sequel. This is possible, in part, due to the nice properties of the payment contract structure introduced in Section~\ref{sec:payment_struct}; this tractability motivates the use of payment contracts of that particular form. Next, we analyze the induced game between the data sources and simplify the {\buyer}'s optimization problem.

\subsection{Induced Equilibrium Between Data Sources}
\label{subsec:sourcegame}
To ensure a notion of \emph{incentive compatibility in equilibrium}, we show there is a well-defined mapping from the  parameters $(c,a)$ chosen by the {\buyers} to the equilibrium $e^\ast$. 

\begin{definition}
For fixed payments $\{p_\src^\buy\}_{\src\in \mc{S}, \buy\in \mc{B}}$, we say $e^\ast=(e_1^\ast, \ldots, e_N^\ast)$ is an \emph{induced Nash equilibrium} if for each
data source $\src \in \mc{S}$:
\begin{equation}
\textstyle e_\src^\ast=\arg\max_{e_\src \in \mathcal{E}_\src} \E \left[ \sum_{j \in \Bc} p_\src^j ({y}({e_\src, e_{-\src}^\ast})) \right] - e_\src
\label{eq:nashsources}
\end{equation}
If~\eqref{eq:nashsources} holds for all $e_{-s} \in \mc{E}_{-s}$ rather than just at $e_{-s}^*$, then we say that $e^*$ is an \emph{induced dominant strategy equilibrium}.
\end{definition}

Suppose now that we have a set of payments of the form discussed in Section~\ref{sec:payment_struct}, characterized by parameters $(c,a)$. Data source $s$ chooses effort $e_s^*$ according to:
\begin{multline}\label{eq:effort1}
\medmuskip=-1mu
\thinmuskip=-1mu
\thickmuskip=-1mu
\nulldelimiterspace=-1pt
\scriptspace=0pt
e_s^* = \arg \max_{e_s\in \mc{E}_\src} \Bigg[ \sum_{b \in \mc{B}}c_\src^\buy -\dpara_\src^\buy\big(
\sigma_\src^2(e_\src) 
+ g_\buy(x_{-\src}, \delta_{x_\src},
\sigma^2_{-\src}(e_{-\src}))\Bigg]  - e_s
\end{multline}
for each choice of $e_{-s} \in \mc{E}_{-s}$ made by the other data sources. It is straight forward to verify that \eqref{eq:effort1} is a concave maximization problem which admits a unique globally optimal solution. This follows from our assumption that $\sigma_s^2$ is convex and decreasing, recalling that $a_s^b \geq0$ for each $b \in \mc{B}$ and observing that $\mc{E}_{s}$ is a convex set. Moreover, note that the choice of this optimal effort $e_s^*$ is not affected by the choice of $e_{-s}$, since each of the $g_\buy(x_{-\src}, \delta_{x_\src},
\sigma^2_{-\src}(e_{-\src}))$ terms enters \eqref{eq:effort1} as a constant from the perspective of $s$. Thus, each choice of contract parameters selected by the {\buyers} leads to an induced dominant strategy equilibrium for the data sources. In particular, note that the choice of
\begin{equation}\label{eq:bara}
\textstyle\bs{\dpara}_\src = \sum_{j \in \mathcal{B}} \dpara_\src^j
\end{equation}
fully characterizes the level of effort that data source $s$ exerts in equilibrium. We reiterate that the constraints on the {\buyer}'s optimization problems will ensure  the chosen contract parameters respect the IR and non-negativity constraints. 

Next, we define $\mu_\src \colon \R_{> 0} \to \R_{\geq 0}$ to be the implicitly-defined map such that
$\mu_\src:\bs{\dpara}_\src\mapsto e_\src^*$ returns the solution to
\eqref{eq:effort1} for 
a given choice of $\bs{\dpara}_\src\in \mb{R}_{> 0}$. In the following section, we will use this mapping to simplify the optimization problem facing each of the {\buyers}. 

\begin{definition}\label{def:feasible_d}
For a given data source $\src\in \mc{S}$, let:
\begin{equation}\label{eq:dmax}
\bs{\underline{{\dpara}}}_\src=\min\ 
\{\bs{\dpara}_\src\in \mb{R}_{>0} :
\mu_\src({\bs{\dpara}}_\src) = 0\}.
\end{equation}
When $\mathcal{E}_s =
[0 , e_\src^{\max}]$ with $0\leq e_\src^{\max}<\infty$, define 
${\mc{A}_\src = [\bs{\underline{\dpara}}_\src, \bs{\bar{\dpara}}_\src]}$ where
\begin{equation}\label{eq:dmin}
\bs{\bar{\dpara}}_\src=\min\ \{\bs{\dpara}_\src\in \mb{R}_{>0} : 
\mu_\src(\bs{\dpara}_\src)=e_\src^{\max}\}
\end{equation}
On the other hand, when $\mathcal{E}_\src = \R_{\geq 0}$, define $\mc{A}_\src =
[\bs{\underline{\dpara}}_\src, \infty)$.
\end{definition}
The above definition implies $\bs{\underline{\dpara}}_\src$ is the minimum value
of $\bs{a}_s$ that the {\buyers} must offer data source $s$ to ensure they do
not have incentive to exert negative effort.\footnote{This situation could correspond to
source $s$ obfuscating their data, for example. We have restricted $\mc{E}$ to the non-negative orthant, so we will add constraints to ensure we are operating within the domain of our model.} Similarly, if the {\buyers} increase
$\bs{a}_\src$ past $\bs{\bar{\dpara}}_\src$, source $s$ cannot further increase
the level of effort they exert, and the mapping $\mu_s$ ceases to be meaningful. Thus, when reformulating each buyers optimization in the following section we will additionally constrain $\bs{a}_s \in \mc{A}_s$ for each $s \in \mc{S}$. 

The following lemma provides properties on the mapping $\mu_\src$ which are
needed to prove existence of equilibria for the game between {\buyers} in the first stage.

\begin{lemma}\label{lemma:increasing_effort}
    Fix a data source $\src\in \mc{S}$. Then the mapping $\mu_\src(\bs{\dpara}_\src)$ is
    continuous and strictly increasing in $\bs{\dpara}_\src$ for all values of
    $\bs{\dpara}_\src \in \mc{A}_s$.
\end{lemma}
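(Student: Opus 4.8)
The plan is to exhibit $\mu_\src$ as the composition of two continuous, strictly increasing maps, so that it inherits both properties on $\mc{A}_\src$. First I would strip \eqref{eq:effort1} down to its essential part: because the terms $c_\src^\buy$ and $g_\buy(x_{-\src},\delta_{x_\src},\sigma^2_{-\src}(e_{-\src}))$ do not depend on $e_\src$, the maximizer $\mu_\src(\bs{\dpara}_\src)$ of \eqref{eq:effort1} coincides with the unique minimizer over $\mc{E}_\src$ of the scalar function $e_\src \mapsto \bs{\dpara}_\src\,\sigma_\src^2(e_\src) + e_\src$. Existence and uniqueness of this minimizer are already guaranteed by the discussion following \eqref{eq:effort1}, so $\mu_\src$ is a well-defined single-valued map.

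The crux is a strict-convexity fact about $\sigma_\src^2$. Writing $\psi_\src := (\sigma_\src^2)'$, Assumption~\ref{ass:sigma_form} gives $(\sigma_\src^2)'' = 2(\sigma_\src')^2 + 2\sigma_\src\sigma_\src''$. Convexity of $\sigma_\src$ yields $\sigma_\src'' \geq 0$, and $\sigma_\src \geq 0$ gives $2\sigma_\src\sigma_\src'' \geq 0$; meanwhile $\sigma_\src$ strictly decreasing together with convexity forces $\sigma_\src' < 0$ on $\mc{E}_\src$ (if $\sigma_\src'$ vanished at some point, monotonicity of $\sigma_\src'$ would make $\sigma_\src$ constant thereafter, contradicting strict decrease), so $2(\sigma_\src')^2 > 0$. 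Hence $(\sigma_\src^2)'' > 0$ throughout $\mc{E}_\src$, i.e.\ $\sigma_\src^2$ is strictly convex and $\psi_\src$ is continuous (by twice continuous differentiability) and strictly increasing. I expect upgrading $\sigma_\src' \le 0$ to $\sigma_\src' < 0$ — the step that promotes $\psi_\src$ from nondecreasing to strictly increasing — to be the main technical obstacle, since it is precisely what forces strict rather than weak monotonicity of $\mu_\src$.

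Finally I would invert. For $\bs{\dpara}_\src \in \mc{A}_\src$ the stationarity condition for the convex minimization reads $\bs{\dpara}_\src\,\psi_\src(\mu_\src(\bs{\dpara}_\src)) + 1 = 0$, i.e.\ $\psi_\src(\mu_\src(\bs{\dpara}_\src)) = -1/\bs{\dpara}_\src$. By Definition~\ref{def:feasible_d} the set $\mc{A}_\src$ is exactly the set of $\bs{\dpara}_\src$ for which the unconstrained stationary point $\psi_\src^{-1}(-1/\bs{\dpara}_\src)$ lies in $\mc{E}_\src$, so this point is feasible and therefore equals the constrained minimizer; thus $\mu_\src(\bs{\dpara}_\src) = \psi_\src^{-1}(-1/\bs{\dpara}_\src)$ on all of $\mc{A}_\src$, where $\psi_\src^{-1}$ exists as a continuous, strictly increasing function on the range of $\psi_\src$ by the previous paragraph. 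Since $\bs{\dpara}_\src \mapsto -1/\bs{\dpara}_\src$ is continuous and strictly increasing on $\R_{>0}$, the composition $\psi_\src^{-1}(-1/\,\cdot\,)$ is continuous and strictly increasing, giving the claim. The endpoints need no separate treatment: $\bs{\underline{\dpara}}_\src$ and $\bs{\bar{\dpara}}_\src$ are by construction the preimages under this composition of $0$ and $e_\src^{\max}$, so continuity and strict monotonicity extend to the closed interval $\mc{A}_\src$.
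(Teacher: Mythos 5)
Your proof is correct, but it takes a genuinely different route from the paper's. The paper works directly with the first-order condition $2\bs{\dpara}_\src\sigma_\src(e_\src)\frac{d}{de_\src}\sigma_\src(e_\src)+1=0$ and \emph{implicitly differentiates} it, obtaining $\frac{d\mu_\src}{d\bs{\dpara}_\src}=\bigl(2(\bs{\dpara}_\src)^2\bigl((\sigma_\src')^2+\sigma_\src\sigma_\src''\bigr)\bigr)^{-1}$, which it asserts is strictly positive (and $\mu_\src$ continuous) by Assumption~\ref{ass:sigma_form}. You avoid differentiating $\mu_\src$ altogether: you solve the same stationarity condition as $\mu_\src(\bs{\dpara}_\src)=\psi_\src^{-1}(-1/\bs{\dpara}_\src)$ with $\psi_\src=(\sigma_\src^2)'$, and deduce continuity and strict monotonicity from inversion and composition of monotone maps. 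The two arguments hinge on the same quantity --- the paper's denominator is exactly $(\bs{\dpara}_\src)^2(\sigma_\src^2)''$, i.e.\ the strict convexity of $\sigma_\src^2$ you establish --- but your packaging is more elementary and more self-contained: it needs no implicit function theorem, and it supplies the one nontrivial justification the paper leaves tacit, namely that strict decrease plus convexity forces $\sigma_\src'<0$, so that $(\sigma_\src')^2>0$ and the second derivative cannot vanish. What the paper's approach buys in exchange is \emph{differentiability} of $\mu_\src$ together with an explicit formula for its derivative, and that stronger byproduct is invoked downstream: the proof of Theorem~\ref{thm:nash} factors $D_{\dpara_\src^\buy}\tilde{L}_\src^\buy$ through $D_{\bs{\dpara}_\src}\mu_\src>0$. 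Your argument proves the lemma exactly as stated but not this extra regularity, so if it replaced the paper's proof, the differentiation step in Theorem~\ref{thm:nash} would need a separate (routine) justification.
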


\begin{proof}
The first-order optimality condition for
the data source is given by:
\begin{equation}\label{eq:effort_d}
\textstyle   2\bs{\dpara}_\src \sigma_\src(e_\src)\frac{d}{de_\src}\sigma_\src(e_\src) +1=0
\end{equation}
By assumption  $\sigma_\src$ is strictly decreasing and convex so that \eqref{eq:effort_d}
has a unique solution for all $\bs{a}_s \in \mc{A}_s$. By definition, this solution is $\mu_s(\bs{a}_s)$. 
Implicit differentiation of~\eqref{eq:effort_d} then yields:
\begin{equation*}
\textstyle\frac{d
\mu_\src}{d\bs{\dpara}_\src} = \Big(2(\bs{\dpara}_\src)^2\Big(
\big(\frac{d}{d\mu_\src}\sigma_\src(\mu_\src) \big)^2 +
\sigma_\src(\mu_\src)\frac{d^2}{d^2
\mu_\src}\sigma_\src(\mu_\src)\Big)\Big)^{-1}
\end{equation*}
where we suppress the dependence of $\mu_\src$ on $\bs{\dpara}_\src$. 
The right-hand side of the above equation is strictly positive by
Assumption~\ref{ass:sigma_form}. Continuity follows directly by Assumption~\ref{ass:sigma_form}. 
\end{proof}

\subsection{Reformulation of Buyers Optimization Problem}
\label{subsec:reformulation}

Finally, using our previous analysis and assumptions, we reformulate the optimization problem faced by each {\buyer}. This reformulation will simplify our analysis of equilibrium behavior in the data market, and lend economic interpretability to the results presented in Section \ref{sec:datam_results}.

Previously, we assumed that {\buyer} $b$'s estimator is separable in Assumption~\ref{ass:estimator_sep}. This allows us to write the loss function of $b$ as:
\begin{align*}
&L^\buy ( (c^\buy,\dpara^\buy), (c^{-\buy},\dpara^{-\buy}) )= 
\textstyle\sum_{i \in \Sc} h_\buy(x_i,x, F_\buy)
\sigma_i^2(\mu_i(\bs{\dpara}_i))\notag\\
&\quad\textstyle-\sum_{j \in -\buy} \comp_j^\buy \sum_{i \in \Sc} h_j
(x_i,x, F_j)\sigma_i^2(\mu_i(\bs{\dpara}_i))\notag\\
&\qquad\textstyle
  + \eta^\buy \sum_{i \in \Sc} \big( c_i^\buy  - \dpara_i^\buy \big[
      \sigma_i^2(\mu_i(\bs{\dpara}_i)) \notag\\
      &\quad\qquad\textstyle+ \sum_{l \in -i} h_b(x_l,
 x_{-i}, \delta_{x_i}) \sigma_l^2(\mu_l(\bs{\dpara}_l)) \big] \big)
\end{align*}
Recall that $x = (x_i)_{i \in \Sc}$ is fixed and common knowledge. Thus, we can replace each
of the evaluations of the $h_j$'s with constants. Towards this end, for each $i, l \in \mc{S}$ and $b \in \mc{B}$, we define:
\begin{equation}
\label{eq:betaparam}
\beta_i^b =h_b(x_i,x,F_b)
\end{equation}
\begin{equation}
  \xi_{i,l}^b=  \left\{\begin{array}{ll}
          h_b(x_l, x_{-i}, \delta_{x_i}) & \ i\neq l\\
          1 &\ i = l\end{array}\right.
\label{eq:betaxi}
\end{equation}
Note that each $\xi_{i,l}^j\geq 0$, by definition of the $\{h_b\}$. In addition, for each $i\in \mb{S}$ and
$b\in \mc{B}$, define:
\begin{equation}
\gamma_i^\buy =\textstyle \beta_i^\buy - \sum_{j \in-\buy} \comp_j^\buy \beta_i^j
\label{eq:gamma}
\end{equation}
Since we defined $\xi$ such that $\xi_{i,i}^\buy = 1$, we can write:
\begin{align*}
&L^\buy((c^\buy,\dpara^\buy), (c^{-\buy},\dpara^{-\buy}) ) =
\textstyle\sum_{i \in \Sc} \gamma_i^\buy \sigma_i^2(\mu_i(\bs{\dpara}_i))\notag\\
&\quad\textstyle  +\eta^\buy \sum_{i \in \Sc}
\left( c_i^\buy
 - \dpara_i^\buy \left[ \sum_{l \in
\Sc}  \xi_{i,l}^\buy \sigma_l^2(\mu_l(\bs{\dpara}_l))  \right] \right)
\label{eq:L}
\end{align*}
Similarly, the expected payment for any data source $\src$ and data
{\buyer} $\buy$  is given by:
\begin{equation*}
\textstyle
\mathrm{p}_\src^b( (c_\src^b,a_s^b), e)=
c_\src^\buy - \dpara_\src^\buy \left( \sum_{i \in \Sc} \xi_{\src,i}^\buy
\sigma_i^2(e_i) \right)
\end{equation*}
Before proceeding, we provide an interpretation of the constants introduced
above. The constant $\beta_s^b$ denotes the relevance of data sampled from
the point $x_s$  when constructing {\buyer} $b$'s estimate, given the distribution of all of the data sources. The parameter $\gamma_s^b$ corresponds to the level of demand 
that {\buyer} $b\in \mc{B}$ has for high-quality data from
source $\src \in \mc{S}$, factoring in the benefit this data supplies to
the competitors of $b$. In other words, $\gamma$ parameters capture the effects of the non-rivalrous nature of data. The parameter $\xi^b_{s,l}$ denotes a measure of coupling that exists between the payment contracts
 $\mathrm{p}^b_s$ and $\mathrm{p}^b_l$. In the case of a single {\buyer} (i.e.~\cite{cai:2015aa}),
this coupling did not prove problematic. In contrast, when there are multiple {\buyers}, each {\buyer} has an incentive to try and exploit this coupling, as shall become clear in our ensuing analysis. This coupling will play a central role in determining the existence and efficiency of equilibrium behavior in the data market. 

Collecting the various expressions we have introduced,  {\buyer} $\buy$'s optimization problem can be re-written as: 
\begin{align}
\min_{(c^\buy,a^\buy)}
& \ L^\buy( (c^\buy,\dpara^\buy), (c^{-\buy},\dpara^{-\buy}) ) \label{eq:opt_p} \\
\text{s.t.} \sum_{j \in \Bc}& \left[c_\src^j - \dpara_\src^j \left( \sum_{i \in \Sc} \xi_{\src,i}^j
\sigma_i^2(\mu_i(\bs{\dpara}_i)) \right)\right]\geq \mu_\src(\bs{\dpara}_\src), ~ \forall s \in \mc{S}
\notag
\\
& c_\src^\buy - \dpara_\src^\buy \left( \sum_{i \in \Sc} \xi_{\src,i}^\buy
\sigma_i^2(\mu_i(\bs{\dpara}_i)) \right)\geq 0, ~ \forall s \in \mc{S}
\notag
\\
& \bs{\dpara}_\src \in \mc{A}_\src, ~ \forall s \in \mc{S} \quad \quad \dpara_\src^\buy \geq 0, ~ \forall s \in \mc{S} \notag
\end{align}
Without loss of generality, we let $\eta^\buy = 1$, by normalizing the $\gamma_\src^\buy$ accordingly. Note that the constraint $c_\src^\buy \geq 0$ can be omitted, in light of the constraint $c_\src^\buy - \dpara_\src^\buy \left( \sum_{i \in \Sc} \xi_{\src,i}^\buy\sigma_i^2(e_i^\ast) \right) \geq 0$, since each $\xi_{\src, i}^\buy \geq 0$ and $\dpara_\src^\buy \geq 0$.

\begin{table*}[h]
    \centering
    \begin{tabular}{|c|l|c|}
        \hline
        \textbf{Notation} & \textbf{Meaning} & \textbf{Defined or First Used in
        Equation} \\\hline
        $\src$ & index of data source & -- \\
        $\mc{S}$ & index set of data sources & --\\
        $\buy$ & index of {\buyer}& --\\
        $\mc{B}$ & index set of {\buyers} & --\\
      
        $\mathrm{p}_\src^\buy$ & expected payment from {\buyer} $\buy$ to source $\src$& \eqref{eq:exppay}\\
        $a_s^b$  & linear term in  $p_s^b$; used to adjust level of effort $e_s$ in equilibrium& \eqref{eq:paystructure} \\
        $a_s$ & vector containing the $a$ parameters offered to source $s$ by the members of $\mc{B}$ & --\\
        $c_s^b$ & constant term in $p_s^b$; used to ensure incentive compatibility in equilibrium & \eqref{eq:paystructure} \\
         $c_s$ & vector containing the $c$ parameters offered to sources $s$ by the members of $\mc{B}$ & --\\
        $\bs{a}_s$ & sum of $a$ parameters offered to source $s$ across all members of $\mc{B}$ & \eqref{eq:bara}\\  
        $\underline{\bs{a}}_s$ &  mimimum value of $\bs{a}_s$ required to ensure source $s$ does not exert negative effort & \eqref{eq:dmin}\\
        $\overline{\bs{a}}_s$ & minimum value of $\bs{a}_s$ at which data source $s$ exerts her maximum effort & \eqref{eq:dmax}\\
        $\mathcal{A}_s$ & $\left[ \underline{\bs{a}}_s,  \overline{\bs{a}}_s \right]$, the allowable range of $\bs{a}_s$ & -- \\
        $\mu_s$ & implicit map which returns the equilibrium value of $e_s$ as a function of $\bs{a}_s$ & -- \\ 
        $\zeta_j^b$ & level of competition between $j,\buy\in
        \mc{B}$ & \eqref{eq:buyer_obj}\\
        $\beta_s^b$ & relevance of data from $x_s$ in constructing {\buyer} $b$'s estimator & \eqref{eq:betaparam}\\
        $\gamma_\src^\buy$ & aggregate demand for $e_\src$ from $\buy$&\eqref{eq:gamma} \\
        $\bs{\gamma}_s$ & sum of demand for data source $s$ across all members of $\mc{B}$& \eqref{eq:gamma_tot}\\
        $\xi^b_{s,l}$ & coupling between $p^b_s$ and $p^b_l$& \eqref{eq:betaxi}\\
        \hline
    \end{tabular}
    \caption{Notation Reference Chart}
    \label{tab:notation}
\end{table*}


\section{Generalized Nash Equilibria in the Data Market}
\label{sec:datam_results}

It is important to note that the constraints each {\buyer} faces in her optimization problem~\eqref{eq:opt_p} depend on the actions taken by the rest of the {\buyers} in the data market. In particular, in order to ensure that the IR and IC constraints are maintained in equilibrium, we require an equilibrium concept which allows each {\buyer}'s admissible action space to depend on the choice of contract parameters selected by the other {\buyers} in the data market. Thus, we will employ the notion of a \emph{generalized Nash equilibrium} \cite{facchinei2007generalized} to study competitive outcomes in the data market, which is a natural extension of the typical notion of Nash equilibrium to this setting. 

Let $\Z_b\subset \mb{R}^{2N}$ be {\buyer} $b$'s actions space; that is, $z^b=(
c^b, a^b)\in \mb{R}^{2N}$ where $c^b=(c^b_s)_{s\in \mc{S}}$ and
$a^b=(a^b_s)_{s\in \mc{S}}$.
Each {\buyer} $b\in \mc{B}$  solves a parametric nonlinear programming problem
given by
\begin{equation}
P_b(z^{-b}) := \min_{z^b}~\{L^b(z^b,z^{-b}) : z^b\in \mc{M}^b(z^{-b})\}
\label{eq:gnep1}
\end{equation}
where 
   $\mc{M}^b(z^{-b}) = \{z^b : \ k_j^b(z^b,z^{-b})\geq 0,\ \ \forall \ j\in
   \mc{J}^b\}\subset \Z_b$
with $\mathcal{J}^b=\{1, \ldots, |\mathcal{J}^b|\}$ a finite set
indexing  the constraint functions of {\buyer} $b$.
Note that, unlike in the classic definition of a Nash equilibrium, 
 the admissible action space of {\buyer} $b$ depends on $z^{-b}$, the actions of $-b=\mathcal{B}
 \setminus
 \{b\}$.
 
We say $\{P_b( \cdot )\}_{b\in \mc{B}}$ is a generalized Nash (GN)
equilibrium problem.
A GN equilibrium is defined as follows.
 \begin{definition}
   A point $z=(z^{1}, \ldots, z^{|\mc{B}|}) \in \prod_{b = 1}^{|\mc{B}|}\Z_b$ is
   said to be a GN equilibrium for
$\{P_b( \cdot )\}_{b\in \mc{B}}$ if
for all $b\in \mc{B}$, $z^{b}$ solves $P_b(z^{-b})$.
     \label{def:gne}
 \end{definition}

We now analyze the game between the {\buyers} utilizing the notion of a GN
problem and GN equilibrium. We will characterize the existence and uniqueness of GN equilibria in two scenarios. In Section~\ref{subsec:unbounded}, we will consider the case where the effort spaces of data sources are unbounded, i.e.~$\mc{E}_s = \R_{\geq 0}$. In Section~\ref{subsec:bounded}, we will characterize the case where each data source has an upper bound on the level of effort they can exert, i.e.~$\mc{E}_s = [0,e_s^{max})$. In Section \ref{subsec:poa}, we will then address the social efficiency of the equilibria identified in Section \ref{subsec:unbounded}. A similar analysis of the equilibria identified in Section \ref{subsec:bounded} can be found in Appendix~\ref{sec:poa_bounded_case}.

Before preceding to out main results, we provide a technical lemma that will have a central role in our ensuing analysis and introduce some notation which will simplify the statement of our results. For compactness, for a given set of $a$ parameters we define $\mu(a) =(\mu_s(\bs{a}))_{s \in \mc{S}}$. (Recall that $\bs{a}$ is the sum of $a$ parameters, as defined in Equation~\eqref{eq:bara}.)
\begin{lemma}
\label{lem:datam_eq_holds}
Suppose $z=(z^\buy)_{\buy\in \mc{B}}$, where $z^b = (c^b,a^b)$,  is a GN equilibrium for the game $\{P_b(\cdot)\}_{b\in \mc{B}}$ defined
by~\eqref{eq:opt_p}. Then for each $s \in \mc{S}$:
\begin{equation}
\textstyle    \sum_{j \in \mc{B}} c_\src^j
 -\sum_{j \in \Bc}\dpara_\src^j \big( \sum_{i \in \Sc}
\xi_{\src,i}^j \sigma_i^2(\mu_i(\bs{\dpara}_i)) \big)
-\mu_\src(\bs{\dpara}_\src)=0
\label{eq:bind}
\end{equation}
In other words, the IR constraint is always binding in equilibrium, and the expected payment to data source $s$ is equal to the effort exerted in equilibrium:
$\mathrm{p}_\src((c_\src,\dpara_\src),\mu(a)) = \mu_s(\bs{a}_s)$
\end{lemma}
\begin{proof}
Suppose that there is an equilibrium
in which~the IR constraint is not binding for some data source
$\src$. Then, there must a exists an {\buyer} $\buy$ whose non-negativity constriant corresponding to source $\src$ is
also not binding. Thus, this cannot be an equilibrium as {\buyer} $\buy$ can
unilaterally improve their payoff by decreasing $c_\src^\buy$ without causing
any of the constraints to be violated, contradicting the assertion
that the given selection of parameters is an equilibrium.
\end{proof}

The result of Lemma~\ref{lem:datam_eq_holds} is a well-known result in 
contract design---that is, the individual rationality constraint always
binds for the optimal contract~\cite{bolton:2005aa}. As shall become clear in our analysis in the following sections, the equality \eqref{eq:bind} forms an implicit constraint that appears in each of the {\buyers}' optimizations, which will be directly responsible for the degeneracy observed in the data market. Roughly speaking, while the $a$ parameters selected by the {\buyers} determine the level of effort that the data sources will exert, the $c$ parameters determine what portion of this effort each {\buyer} is expected to compensate.

For each $\src\in \mc{S}$, define:
\begin{equation}\label{eq:gamma_tot}
  \textstyle  \bs{\gamma}_\src = \sum_{j \in \mathcal{B}}\gamma_\src^j
\end{equation}
which can be interpreted to be the total demand for high quality data from data source $s$.
Next, we define:
\begin{equation*}
\bs{c}_s = \sum_{j \in \mc{B}} c_s^j
\qquad 
q_\src^\buy(\dpara)=  \dpara_s^\buy 
\left( 
\sum_{i \in \Sc}
\xi_{\src,i}^j \sigma_i^2(\mu_i(\bs{\dpara}_i))
\right)
\end{equation*}
\begin{equation*}
q_\src(\dpara) = 
\sum_{j \in \Bc} q_\src^j(\dpara) 
+\mu_\src(\bs{\dpara}_\src)
\end{equation*}
Note that Lemma~\ref{lem:datam_eq_holds} implies that if $(c,a)$ is an GN equilibrium in the game between the {\buyers} then $\bs{c}_s = q_s(a)$ will hold for each $s \in \mc{S}$. Moreover, the non-negativity constraints in the game between the buyers will hold only if $c_s^b \geq q_s^b(a)$ for each $s \in \mc{S}$ and $b \in \mc{B}$.

\subsection{Unbounded Effort Spaces}
\label{subsec:unbounded}

Let us first consider the case where there is no upper bound on the effort the data
sources may exert, i.e.~$\mc{E}_\src=\mb{R}_{\geq 0}$. 
\begin{theorem}
\label{thm:nash}
Consider the game $\{P_b( \cdot )\}_{b\in \mc{B}}$, where each {\buyer}'s objective is to solve the optimization
in~\eqref{eq:opt_p}. Suppose that for each $\src\in\mc{S}$,
$\mathcal{E}_\src = \R_{\geq 0}$ and $\bs{\gamma}_\src \geq
\underline{\bs{\dpara}}_\src$. Further, suppose that 
 $\gamma_i^j > 0$,  $\forall \ i \in \Sc, j \in \Bc$. 
Then, there is either no GN equilibrium or an infinite number of
GN equilibria. Moreover, if $(\bar{c},\bar{a})$ is a GN equilibrium, then the following conditions hold:  
\begin{enumerate}
\item The set of infinite GN equilibria is given by:
\begin{align*}
\textstyle    \big\{(c,a) : a =\bar{a},\ 
\bs{c}_s = q_\src(\bar{a}),\ c_s^b  \geq q_{\src}^\buy(\bar{\dpara}),  \ \forall s, \forall b \}
\label{eq:polytope}
\end{align*}
That is, the $a$ parameters selected by the {\buyers} are the same across each GN equilibrium, and all degeneracy lies in the equilibrium $c$ parameters which lie in the $|B|$-dimensional convex polytope defined above.
\item The effort exerted by each data source is the same in each GN equilibrium and the efforts constitute a unique induced dominated strategy equilibrium
between the data sources. More precisely, each data source exerts effort $\mu_\src(\bar{\bs{\dpara}}_\src)$ in all GN equilibria.
\end{enumerate}
\end{theorem}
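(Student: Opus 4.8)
The plan is to exploit the binding individual-rationality condition of Lemma~\ref{lem:datam_eq_holds} to collapse each {\buyer}'s problem into a separable, strictly convex scalar problem in the effort induced at each source, and then to read off both the rigidity of the equilibrium $\dpara$ and the degeneracy of the equilibrium $c$ from the resulting geometry. I would first fix an arbitrary GN equilibrium $(\bar c,\bar\dpara)$. By Lemma~\ref{lem:datam_eq_holds} the IR constraint binds at every source, so \eqref{eq:bind} gives $\bs c_\src = q_\src(\bar\dpara)$ for all $\src\in\Sc$, and the per-{\buyer} expected payments $\mathrm{p}_\src^\buy = c_\src^\buy - q_\src^\buy(\dpara)$ are then nonnegative (by the non-negativity constraint) and sum over $\buy$ to $\mu_\src(\bs{\dpara}_\src)$. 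Substituting this identity into the reformulated loss, the terms $\sum_\src q_\src^\buy(\dpara)$ cancel, leaving $L^\buy=\sum_\src \gamma_\src^\buy\,\sigma_\src^2(\mu_\src(\bs{\dpara}_\src))+\sum_\src \mathrm{p}_\src^\buy$; thus {\buyer} $\buy$'s loss depends on $\dpara$ only through the sums $\bs{\dpara}_\src$ and on its own payment shares. Optimizing the $c^\buy$ block first (holding $\dpara$ and $c^{-\buy}$ fixed) shows the best response is the shortfall payment $\mathrm{p}_\src^\buy=\max\{0,\ \mu_\src(\bs{\dpara}_\src)-\sum_{j\neq\buy}\mathrm{p}_\src^j\}$, since the only active lower bounds on $c_\src^\buy$ are the non-negativity bound $q_\src^\buy(\dpara)$ and the IR bound.

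I would then analyze the $\dpara^\buy$ block. Using the strict monotonicity and continuity of $\mu_\src$ from Lemma~\ref{lemma:increasing_effort}, I change variables from $\bs{\dpara}_\src$ to the induced effort $e_\src=\mu_\src(\bs{\dpara}_\src)$ and minimize, source by source, $\gamma_\src^\buy\sigma_\src^2(e_\src)+(e_\src-\sum_{j\neq\buy}\mathrm{p}_\src^j)^+$. Because $\sigma_\src^2$ is strictly convex and decreasing (a consequence of Assumption~\ref{ass:sigma_form}) and $\gamma_\src^\buy>0$, this objective is strictly convex with a unique minimizer, whose interior first-order condition $\gamma_\src^\buy(\sigma_\src^2)'(e_\src)=-1$ identifies {\buyer} $\buy$'s ideal effort as $\mu_\src(\gamma_\src^\buy)$ when $\buy$ is the marginal payer, and prescribes free-riding otherwise. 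I would argue these best responses are mutually consistent only at one profile: the effort at $\src$ is driven to $\mu_\src(\max_\buy\gamma_\src^\buy)$, a highest-demand {\buyer} supplies the entire incentive, and every other {\buyer} sets $\dpara_\src^\buy=0$; any other split lets some {\buyer} strictly lower its own incentive and free-ride, contradicting equilibrium. This step pins down $\bar\dpara$ uniquely and is where I expect the main difficulty to lie, because the coupling coefficients $\xi_{\src,i}^\buy$ make a {\buyer}'s deviation in $\bs{\dpara}_\src$ shift the variance terms entering every other payment $q_\src^j$, so the per-source objectives are not exactly separable off-equilibrium; the hypotheses $\bs{\gamma}_\src\geq\underline{\bs{\dpara}}_\src$ and $\gamma_i^j>0$ must be used to certify feasibility ($\bar{\bs{\dpara}}_\src\in\mc{A}_\src$) and interiority of the minimizer, with the complementary regime being precisely where no equilibrium exists.

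Finally, with $\bar\dpara$ fixed, I would characterize the equilibrium $c$'s. The binding IR gives the equality $\bs c_\src=q_\src(\bar\dpara)$, while non-negativity gives $c_\src^\buy\geq q_\src^\buy(\bar\dpara)$; subtracting, the slack $\bs c_\src-\sum_\buy q_\src^\buy(\bar\dpara)=\mu_\src(\bar{\bs{\dpara}}_\src)$ is strictly positive, so for $|\Bc|\geq 2$ it can be distributed among the {\buyers} in infinitely many ways, yielding exactly the claimed convex polytope. To verify every polytope point is a GN equilibrium, I note that at such a point the IR constraint holds with equality, so no {\buyer} can lower any $c_\src^\buy$ without violating its own IR constraint, and raising it only increases $L^\buy$; hence each $c^\buy$ is a best response, while $\bar\dpara$ remains a best response because the total payment equals the effort independently of the split. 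The dichotomy is then immediate: if one GN equilibrium exists, the entire polytope consists of GN equilibria, so there are infinitely many; otherwise there are none. The second assertion of the theorem follows at once, since $\bar{\bs{\dpara}}_\src$ is common to all equilibria and, as established in Section~\ref{subsec:sourcegame}, the induced effort $\mu_\src(\bar{\bs{\dpara}}_\src)$ is a dominant-strategy equilibrium independent of $e_{-\src}$.
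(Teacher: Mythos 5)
Your first and last steps track the paper --- binding the IR constraint via Lemma~\ref{lem:datam_eq_holds}, substituting it into the loss, and recovering the polytope of equilibrium $c$ parameters at the end --- but the middle step, where you characterize the equilibrium $\dpara$'s, contains a genuine error. When you minimize, source by source, $\gamma_\src^\buy\sigma_\src^2(e_\src)+\big(e_\src-\sum_{j\neq\buy}\mathrm{p}_\src^j\big)^+$, you hold the other {\buyers}' expected payments $\mathrm{p}_\src^j$ fixed while varying $e_\src$. They are not fixed: $\mathrm{p}_\src^j = c_\src^j - \dpara_\src^j\sum_{i\in\Sc}\xi_{\src,i}^j\sigma_i^2(\mu_i(\bs{\dpara}_i))$, and since $\xi_{\src,\src}^j=1$, this payment moves first-order with $\sigma_\src^2(e_\src)$. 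When {\buyer} $\buy$ raises $\dpara_\src^\buy$ and drives the variance down, every other {\buyer} automatically pays source $\src$ more (the contracts reward lower variance), and under the binding IR constraint this directly reduces what $\buy$ must pay; so $\buy$'s true marginal cost of inducing effort is strictly less than the $+1$ you charge it. Carrying this dependence through, the correct per-source stationarity condition is
\begin{equation*}
\dpara_\src^\buy \;=\; \gamma_\src^\buy + \sum_{j\in-\buy}\sum_{l\in-\src}\dpara_l^j\,\xi_{l,\src}^j ,
\end{equation*}
i.e.\ \eqref{eq:optimal_d} in the paper: \emph{every} {\buyer} contributes at least $\gamma_\src^\buy>0$, and the equilibrium $\dpara$ is the unique nonnegative solution (when one exists) of the Leontief-type system $\dpara=\Amat\dpara+\gamma$ in \eqref{eq:eq_conditions}. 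Your claimed profile --- effort $\mu_\src(\max_\buy\gamma_\src^\buy)$, the highest-demand {\buyer} paying the entire incentive, all others setting $\dpara_\src^\buy=0$ --- is not an equilibrium: take all cross-couplings $\xi_{i,l}^j=0$ for $i\neq l$; then any {\buyer} $j$ with $\dpara_\src^j=0<\gamma_\src^j$ has $D_{\dpara_\src^j}\tilde{L}_\src^j<0$ and strictly gains by raising $\dpara_\src^j$ to $\gamma_\src^j$. In that same uncoupled case the paper's equilibrium effort is $\mu_\src(\bs{\gamma}_\src)$, which is socially optimal; your smaller value $\mu_\src(\max_\buy\gamma_\src^\buy)$ would make every equilibrium inefficient even without coupling, contradicting Theorem~\ref{thm:always_lose}.

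The same omission is why your proposal cannot deliver the existence dichotomy. In the paper, equilibria exist iff the system $\dpara=\Amat\dpara+\gamma$ admits a nonnegative solution, which happens iff $\rho(\Amat)<1$; when $\rho(\Amat)\geq1$ the mutual subsidization between contracts feeds back on itself, best responses run away, and no GN equilibrium exists. Your argument contains no mechanism that can fail in this way --- the ``max-demand {\buyer} pays all'' profile you construct would always ``exist'' --- so the ``either none or infinitely many'' claim cannot be recovered from it, and your appeal to ``the complementary regime'' is not backed by any concrete condition. The final polytope argument for the $c$ parameters is sound in spirit and matches the paper, but it is erected on the wrong $\bar{\dpara}$.
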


Before going ahead with the proof of the theorem, we discuss its hypotheses and implications. The hypothesis that $\bs{\gamma}_\src \geq
\underline{\bs{\dpara}}_\src$ implies that there is enough demand for the data from source $s$ such that she does not have incentive to exert negative effort in equilibrium. Together, the {\buyers} will provide sufficient incentive to $s$ so that $s$ accepts each of the contracts offered to her, and truthfully report her query-response. When we investigate the case where $s$ only provides readings to a subset of the {\buyers} in Section \ref{subsec:partial}, only the relevant subset of {\buyers} must maintain this constraint. This condition places a restriction on what subsets of incentives from the {\buyers} each data source is willing to accept. 

As we discovered in Section \ref{subsec:sourcegame}, the $a=(a^{\buy})_{\buy\in \mc{B}}$ parameters selected by the {\buyers} uniquely determine how much 
effort the data sources exert in equilibrium. Intuitively, the fact that the $a$ parameters are constant across all GN equilibria means that, when GN equilibria do exist in the game between the {\buyers}, the {\buyers} have agreed to incentivize the data sources to each exert a particular level of effort. The proof of the theorem will shed some light on how this unique choice of $a$ parameters is selected when GN equilibria exist, and also demonstrate what `goes wrong' in cases where the {\buyers} cannot agree on how much effort to incentivize the sources to exert. In the latter case, no GN equilibrium solution exists in the game between the {\buyers}. Further commentary on this point is provided after the proof of Theorem~\ref{thm:nash_constrained}.

Meanwhile, for a fixed profile of $a$ parameters, the $c=(c^{\buy})_{\buy\in \mc{B}}$ parameters determine how much
of this effort each {\buyer} is responsible for compensating in expectation.
Even when {\buyers} are able to agree on how much effort to incentivize from the data sources and select the unique GN equilibria choice for $a$, 
there is a non-uniqueness in the $c$ parameters in equilibrium. This implies that there is a fundamental ambiguity in who will fund the exertion of the data sources. 
 In the extreme case, it is possible for one {\buyer} to pay for the entirety of
 the expected compensation offered to the data sources, while the other
 {\buyers} pay nothing in expectation. 

\begin{proof}[{Proof of Theorem~\ref{thm:nash}}]
By Lemma~\ref{lem:datam_eq_holds}, we have that:
\begin{equation}
\textstyle c_\src^\buy = \sum_{j \in \Bc}\dpara_\src^j \left( \sum_{i \in \Sc}
\xi_{\src,i}^j \sigma_i^2(\mu_i(\bs{\dpara}_i) ) \right) +\mu_\src(\bs{\dpara}_\src) -
\sum_{j \in -\buy} c_\src^j
\label{eq:cbuy}
\end{equation}
Plugging in this constraint, the cost function for {\buyer} $\buy$ can be expressed as:
\begin{align*}
\textstyle \tilde{L}^\buy &
\textstyle 
(\dpara^\buy, (c^{-\buy},\dpara^{-\buy}) ) 
=
\sum_{i \in \Sc}
\Big(
\gamma_i^\buy \sigma_i^2(\mu_i(\bs{\dpara}_i))
\\ &
\textstyle 
+ \sum_{j\in-\buy}
\left[
\dpara_i^j \left(
\sum_{l\in\mc{S}} \xi_{i,l}^j \sigma_l^2 (\mu_l(\bs{\dpara}_l)) 
\right) 
-c_i^j
\right]
+ \mu_i(\bs{\dpara}_i)
\Big)
\end{align*}
By swapping the roles of $i$ and $l$ in the middle term above, {\buyer} $\buy$'s cost
can be decomposed into the sum of costs for each data sources. We define:
\begin{align*}
    \tilde{L}^\buy_\src(\dpara^\buy_\src, (c^{-\buy},\dpara^{-\buy})
    )&\textstyle=\big(\gamma_\src^b+\sum_{j\in -\buy}\sum_{l\in
        \mc{S}}\dpara_l^j\xi_{l,s}^j\big)
        \sigma_\src^2(\mu_\src(\bs{\dpara}_\src))\\
        &\textstyle\quad-\sum_{j\in
        -\buy}c_\src^j+\mu_\src(\bs{\dpara}_\src)
\end{align*}
Then {\buyer} $\buy$'s optimization problem reduces to:
\begin{align*}
    \min_{\dpara^\buy} \ & \textstyle\sum_{\src\in \mc{S}}\tilde{L}_\src^\buy(\dpara_\src^\buy,
    (c^{-\buy},\dpara^{-\buy}) )\\
    \text{s.t.}\ &\  \textstyle\sum_{j\in
-\buy} 
\big[
\dpara_\src^j \left( \sum_{i \in \Sc}
\xi_{\src,i}^j \sigma_i^2(\mu_i(\bs{\dpara}_i) ) \right)-c_\src^j
\big] +\mu_s(\bs{\dpara}_\src)\geq
0\\
&\ \bs{\dpara}_\src\in \mc{A}_\src, \ \dpara_\src^\buy\geq 0
\label{eq:newopt2}
\end{align*}
Note that the cost does not depend on $c_s^b$, for any $s$. 
We complete the argument by ignoring the constraints and showing that the constraints are
satisfied for the set of equilibria we characterize.


Differentiating the cost with respect to $\dpara_{\src}^\buy$ and
applying~\eqref{eq:effort_d} and $\xi_{\src,\src}^j = 1$ for all $j$, we
have that:
    \begin{align*}
      \textstyle  D_{\dpara_{\src}^\buy}\tilde{L}_{\src}^\buy&=
      \textstyle\frac{1}{\bs{\dpara}_\src} \left(
            \dpara_\src^\buy-\gamma_\src^\buy -\sum_{j \in -\buy}\sum_{l \in
            -\src}\dpara_l^j \xi_{l,\src}^j
        \right)D_{\bs{\dpara}_\src}\mu_\src
    \end{align*}
    where $D_x\equiv \frac{\partial}{\partial x}$.
Applying Lemma~\ref{lemma:increasing_effort}, which states
$D_{\bs{\dpara_\src}}\mu_\src>0$, we get the following conditions:

\begin{equation*}
\begin{cases}
    D_{\dpara_{\src}^\buy}\tilde{L}_{\src}^\buy < 0, & \text{if } 0 \leq \dpara_\src^{\buy} < \gamma_\src^{\buy} + \sum_{j \in
-\buy}\sum_{l \in -\src}\dpara_l^j \xi_{l,\src}^j \\
D_{\dpara_{\src}^\buy}\tilde{L}_{\src}^\buy= 0, & \text{if } \dpara_\src^{\buy} = \gamma_\src^{\buy} + \sum_{j \in
-\buy}\sum_{l \in -\src}\dpara_l^j \xi_{l,\src}^j \\
D_{\dpara_{\src}^\buy}\tilde{L}_{\src}^\buy> 0, & \text{if } \dpara_\src^{\buy} > \gamma_\src^{\buy}+ \sum_{j \in
-\buy}\sum_{l \in -\src}\dpara_l^j \xi_{l,\src}^j 
\end{cases}
\end{equation*}

Hence, the $\dpara_\src^{\buy}$ that minimizes {\buyer} $\buy$'s cost satisfies:
\begin{equation}
\label{eq:optimal_d}
\textstyle\dpara_\src^\buy = \gamma_\src^\buy + \sum_{j \in -\buy}\sum_{l \in
-\src}\dpara_l^j \xi_{l,\src}^j
\end{equation}
Performing this analysis for all combinations of $\src \in \mathcal{S}$ and
$\buy \in \mathcal{B}$ yields a system of $M\times N$ equations  with $M \times N$ unknowns, of the form~\eqref{eq:optimal_d}.

 Let ${\dpara}$ denote a column vector with entries
$\dpara_i^j$ and let ${\gamma}$ denote a column vector containing all the terms of the form $\gamma_i^j$ for each $i \in \Sc$ and $j \in \Bc$. Then,
\eqref{eq:optimal_d} can be written as the system of
equations given by:
\begin{equation}
\label{eq:eq_conditions}
\dpara = \Amat \dpara + {\gamma}
\end{equation} 
Here, $\Xi$ is a non-negative matrix whose entries are composed of the
$\xi_{i,l}^j$ values such that~\eqref{eq:eq_conditions} expresses the set of
equality constraints defined by~\eqref{eq:optimal_d} for all $\src \in \Sc$ and
$\buy \in \Bc$. 
To solve this reduced game, it suffices to find a
solution to~\eqref{eq:eq_conditions} such that $\bs{\dpara}_i\in \mc{A}_i$ and $\dpara_i^j \geq 0$ for all $i\in
\mc{S}$ and $j\in \mc{B}$.

Let us consider first solutions to the system of equations
\eqref{eq:eq_conditions}.    
Systems of equations of this form are well studied in the economics literature,
as they are of the form specified by the celebrated Leontief input-output model.
It has been shown that such systems of equations have a non-negative solution if
and only if $\rho(\Amat) < 1$, where $\rho(\Amat)$ is the spectral radius of
$\Amat$~\cite{Stanczak2006}. Moreover, if such a solution exists, it must be unique.

Thus, if $\rho(\Amat) < 1$, inversion of the $(I -\Amat)$ matrix yields the
equilibrium $\dpara$, and, by Lemma~\ref{lem:datam_eq_holds}, we can pick any
$c$ such that for each $s\in \mc{S}$, $\buy\in \mc{B}$,
\begin{equation}
\textstyle    c_\src^\buy \geq \dpara_\src^\buy \left(\sum_{l \in \Sc}
    \xi_{\src,l}^\buy\sigma_l^2(\mu_l(\bs{\dpara}_l))\right)
\end{equation}
and \eqref{eq:cbuy} hold. 


If $\rho(\Amat) \geq 1$, there will not exist a non-negative solution and there
is no point $({c},{\dpara})$ that simultaneously optimizes~\eqref{eq:opt_p} for
all $\buy\in \mc{B}$. Finally, we demonstrate that none of the solutions to
\eqref{eq:eq_conditions} and corresponding $c$ values defined above
violate the constraint $\bs{\dpara}_\src \in \mc{A}_\src =
[\underline{\bs{\dpara}}_\src, \infty )$. By inspecting equations of the
    form~\eqref{eq:optimal_d}, we see that $\bs{\dpara}_\src \geq
    \bs{\gamma}_\src \geq \underline{\bs{\dpara}}_\src$, and thus the
    constraints remain satisfied. It follows that there is either a unique set
    of
    ${\dpara}$ parameters defining a GN
    equilibria for the reduced game between the {\buyers}, otherwise there is no GN equilibrium. Moreover, in the case that an equilibrium choice of $\dpara$ does exist, by inspection we see that the polytope of $c$ parameters defined in the statement of the theorem constitute GN equilibria for the full game. 
    %
\end{proof}
It is interesting to note that the existence of GN equilibria
depends solely on the value of the $\xi_{\src,l}^\buy$ parameters; it does not
depend on the magnitude of the $\gamma_\src^\buy$ parameters (given that they are large enough to ensure participation of all parties). This implies that
the existence of GN equilibria 
follows from the form of the contract mechanisms, and does not depend on whether or not there are solutions that are beneficial to all parties involved. 

\subsection{Bounded Effort Spaces}
\label{subsec:bounded}

Let us now consider the case where the data sources' effort space is upper-bounded, i.e.~$\mc{E}_s=[0,e_\src^{\max}]$, $0\leq e_\src^{\max}<\infty$.

\begin{theorem}\label{thm:nash_constrained}
Consider the game  $\{P_b(\cdot)\}_{b\in \mc{B}}$ where each {\buyer}'s objective is to solve the optimization
in~\eqref{eq:opt_p}. 
 Suppose that for each $\src\in\mc{S}$,
 $\mathcal{E}_\src = [0,e_\src^{\max}]$ with $0\leq e_\src^{\max}<\infty$
 and that $\gamma_i^j > 0$ for all $i \in \Sc, j \in \Bc$. 
There is  an infinite number of
GN
equilibria $z=(z^\buy)_{\buy\in \mc{B}}$.
Moreover, the following statements hold:
\begin{enumerate}[topsep=0pt, itemsep=0pt]
\item There may exist two GN equilibria $(c_1,a_1)$ and $(c_2, a_2)$ such that $a_1 \neq a_2$; 
\item If $(\bar{c}, \bar{a})$ constitutes a GN equilibrium then the following set of parameters also constitute GN equilibria:
\begin{align*}
\big\{(c, a) : a = \bar{a},  \ \forall \src\in \mc{S}, \buy\in \mc{B}, \
c_s = q_\src(\bar{a}), \ c_s^b \geq q_s^b(\bar{a})\big\}
\end{align*}
\item While the data sources may exert different levels of effort across different
    equilibria, each collection of ${\dpara}$ parameters chosen by the {\buyers} still
    induce a dominanted strategy equilibrium between the data sources. 
\end{enumerate}
\end{theorem}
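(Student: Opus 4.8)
The plan is to recycle the reduction from the proof of Theorem~\ref{thm:nash} and isolate the single structural change caused by the effort bound. First I would invoke Lemma~\ref{lem:datam_eq_holds} to conclude that the IR constraint binds at any GN equilibrium, substitute the resulting expression~\eqref{eq:cbuy} for $c_\src^\buy$ into each buyer's objective, and decompose the cost across sources into the terms $\tilde{L}_\src^\buy$ exactly as in the unbounded case. The derivative $D_{\dpara_\src^\buy}\tilde{L}_\src^\buy$ is then proportional to $\big(\dpara_\src^\buy-\gamma_\src^\buy-\sum_{j\in-\buy}\sum_{l\in-\src}\dpara_l^j\xi_{l,\src}^j\big)\,D_{\bs{\dpara}_\src}\mu_\src$, the same quantity that produced the fixed-point condition~\eqref{eq:optimal_d}. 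The only difference from Theorem~\ref{thm:nash} is that Lemma~\ref{lemma:increasing_effort} now guarantees $D_{\bs{\dpara}_\src}\mu_\src>0$ only on the \emph{compact} interval $\mc{A}_\src=[\underline{\bs{\dpara}}_\src,\bar{\bs{\dpara}}_\src]$, so each buyer's subproblem becomes a constrained minimization with the coupled cap $\bs{\dpara}_\src\le\bar{\bs{\dpara}}_\src$.

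Next I would characterize the constrained best response. Writing $\hat{\dpara}_\src^\buy=\gamma_\src^\buy+\sum_{j\in-\buy}\sum_{l\in-\src}\dpara_l^j\xi_{l,\src}^j$, the cost $\tilde{L}_\src^\buy$ is strictly decreasing in $\dpara_\src^\buy$ below $\hat{\dpara}_\src^\buy$ and strictly increasing above it, so buyer $\buy$'s optimal $\dpara_\src^\buy$ is the projection of $\hat{\dpara}_\src^\buy$ onto the interval that the aggregate constraint permits given the competitors' choices. This splits into two regimes. If the Leontief system~\eqref{eq:eq_conditions} admits a non-negative solution $\bar{\dpara}$ with $\sum_{j}\bar{\dpara}_\src^j\in\mc{A}_\src$ for every $\src$, that solution is a GN equilibrium by the identical argument used in Theorem~\ref{thm:nash}. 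Otherwise I would construct an equilibrium in which every source whose unconstrained demand exceeds the cap is driven to saturation $\bs{\dpara}_\src=\bar{\bs{\dpara}}_\src$ (treating the floor $\underline{\bs{\dpara}}_\src$ symmetrically), with each buyer pinned at its feasibility boundary and therefore unable to improve. The crucial point is that compactness of $\mc{A}_\src$ forces such a profile to exist even when $\rho(\Amat)\ge 1$; this is exactly the mechanism by which the effort bound produces equilibria in cases where the unbounded game of Theorem~\ref{thm:nash} had none.

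With an equilibrium profile $\bar{\dpara}$ in hand, the three itemized claims follow. For item~(2), fix $\dpara=\bar{\dpara}$: the binding IR constraint pins the aggregate $\bs{c}_\src=q_\src(\bar{\dpara})$, the reduced objective is independent of buyer $\buy$'s own $c_\src^\buy$ once~\eqref{eq:cbuy} is imposed, and the non-negativity constraints demand only $c_\src^\buy\ge q_\src^\buy(\bar{\dpara})$; because the floors $q_\src^\buy(\bar{\dpara})$ sum to strictly less than $q_\src(\bar{\dpara})$ (the gap being the source's effort compensation), the stated set is a nonempty polytope, and every split in it is an equilibrium, giving infinitely many whenever $|\mc{B}|\ge 2$. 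Item~(3) is immediate from Section~\ref{subsec:sourcegame}: for any fixed $\dpara$, the effort-selection problem~\eqref{eq:effort1} is strictly concave with a unique maximizer independent of $e_{-\src}$, and restricting $\mc{E}_\src$ to a bounded interval preserves both the concavity and this independence, so every profile of $\dpara$ parameters induces a dominant strategy equilibrium among the sources. For item~(1), I would exhibit the saturated regime explicitly---e.g.\ a single source with $\bs{\gamma}_\src>\bar{\bs{\dpara}}_\src$, or two buyers sharing one saturated source---where the equilibrium conditions collapse to $\sum_{\buy}\dpara_\src^\buy=\bar{\bs{\dpara}}_\src$ together with $\dpara_\src^\buy\le\hat{\dpara}_\src^\buy$, a system satisfied by a continuum of distinct distributions; any two of these give equilibria with $a_1\neq a_2$.

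The main obstacle is the existence argument in the boundary regime. Unlike the unbounded case one cannot simply invert $(I-\Amat)$, and because $\Amat$ has nonzero off-diagonal entries, whether a given source saturates depends on the parameters $\dpara_l^j$ chosen for the other sources, so the partition of $\mc{S}$ into saturated and interior sources is not known a priori. I expect to resolve this with a fixed-point or monotone-iteration argument on the compact box $\prod_{\src}\mc{A}_\src$: non-negativity of the $\xi_{l,\src}^j$ makes each buyer's projected best response monotone in the competitors' parameters, so iterating it converges to a profile that can then be checked against the IR, non-negativity, and box constraints. The second delicate point is pinning down statement~(1) rigorously---producing genuinely distinct equilibrium $\dpara$ vectors rather than merely distinct $c$ splits---which I would settle with the explicit saturated construction above.
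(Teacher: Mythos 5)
Your reduction---binding IR via Lemma~\ref{lem:datam_eq_holds}, substituting \eqref{eq:cbuy}, decomposing the cost per source, and characterizing each buyer's constrained best response as the projection of $\hat{\dpara}_\src^\buy = \gamma_\src^\buy + \sum_{j\in-\buy}\sum_{l\in-\src}\dpara_l^j\xi_{l,\src}^j$ onto the admissible interval---is exactly the paper's path; your projection formula is precisely the paper's \eqref{eq:br_const}. Items (2) and (3) are also handled as the paper handles them, and your explicit saturated construction for item (1) is in fact more concrete than the paper's treatment of that claim.

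The genuine gap is in the existence step, which is the heart of the theorem and which you leave unresolved. Your proposed monotone iteration rests on the claim that non-negativity of the $\xi_{l,\src}^j$ makes each buyer's projected best response monotone in the competitors' parameters. That claim is false precisely in the regime where you need it: when $T_\src^b(\dpara^{-b}) > \overline{\bs{\dpara}}_\src$, the best response in \eqref{eq:br_const} is $\dpara_\src^b = \overline{\bs{\dpara}}_\src - \sum_{j\in-b}\dpara_\src^j$, which is strictly \emph{decreasing} in the competitors' contributions to source $\src$. Only the induced aggregate $\bs{\dpara}_\src$ (the projection of $T_\src^b$ onto $\mc{A}_\src$) is monotone; equilibrium, however, is a fixed point in the \emph{individual} parameters $\dpara_\src^b$, and these cannot be recovered from aggregates because the interior branch weights each competitor's parameter by a buyer-specific coefficient $\xi_{l,\src}^j$. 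So a Tarski-type convergence argument does not apply as stated, and the combinatorial question of which sources saturate---which you correctly identify as the main obstacle---remains open in your sketch. The paper avoids this case analysis entirely: it verifies the hypotheses of the GNE existence theorem of \cite{Park:2015aa} (Theorem~\ref{thm:gne_existence}), namely continuity of each objective (Assumption~\ref{ass:sigma_form} together with Lemma~\ref{lemma:increasing_effort}), upper semi-continuity of the constraint correspondences, and the fact that \eqref{eq:br_const} makes every best-response set a singleton (hence contractible) varying continuously in $\dpara^{-b}$; compactness of $\prod_{\src}\mc{A}_\src$ then yields existence uniformly over all saturation patterns. To make your constructive route rigorous you would need a fixed-point argument that tolerates non-monotone responses (Brouwer/Kakutani applied to the continuous singleton best-response map on a suitable compact convex set), which is essentially what the cited theorem packages.
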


\begin{proof}
    Following the proof of Theorem \ref{thm:nash}, the problems $P_b(\cdot)$
    can be reduced to 
the optimization problem 
\begin{equation*}
    \textstyle \tilde{P}_\buy(z^{-b}):= \min_{\dpara^b}~\big\{ 
        \tilde{L}^b(\dpara^b,z^{-b}):\forall s\in \mc{S}, \bs{\dpara}_s \in \mc{A}_s, \
\dpara^b
\geq 0 \big\}
\end{equation*}
To show existence, we show that the game defined by
$\{\tilde{P}_b(\cdot)\}_{\buy\in \mc{B}}$
satisfies the assumptions
of Theorem~\ref{thm:gne_existence}, which is originally from \cite{Park:2015aa} and can be found in the Appendix.

First, we note that the objective function of each buyer is continuous in each of its arguments. Indeed,  Assumption \ref{ass:sigma_form} ensures that $\sigma_s^2$ is continuous and Lemma \ref{lemma:increasing_effort} ensures the map $\mu_s$ is continuous for each $s \in \mc{S}$. Continuity of the objective function follows by recalling that the composition of continuous maps yields a continuos map. 
Next, for each $j\in \mc{B}$, the constraints defining 
$\mathcal{M}_{-j}(z^{-j})$ are continuous, and thus the correspondence $z^{-j}
\mapsto \mathcal{M}_{-j}(z^{-j})$ is upper semi-continuous (indeed, even
continuous). Continuing with the analysis from Theorem \ref{thm:nash}, it is
easy to see that in the case where we constrain $\bs{\dpara}_i \in \mc{A}_i$,
the best response set for {\buyer} $b$ is defined by the following conditional
statements for each $\src\in \mc{S}$: 
\begin{equation}\label{eq:br_const}
\begin{cases}
    \dpara_\src^b =\underline{\bs{\dpara}}_\src -\sum_{j \in -b} \dpara_s^b & \text{if} \
T_\src^b(\dpara^{-b}) < \underline{\bs{\dpara}}_\src \\ 
\dpara_\src^b =\gamma_\src^b + \sum_{j \in -b}\sum_{l \in -\src}\dpara_l^j
\xi_{l,\src}^j &
\text {if} \ T_{\src}^b(\dpara^{-b}) \in \mc{A}_\src\\
\dpara_\src^b =\overline{\bs{\dpara}}_\src -\sum_{j \in -b} \dpara_s^j & \text{if} \
T_\src^b(\dpara^{-b}) > \overline{\bs{\dpara}}_\src
\end{cases}
\end{equation}
where $T_\src^b(\dpara^{-b}) = \gamma_\src^b + \sum_{j \in -b}\sum_{l \in
-\src}\dpara_l^j \xi_{l,\src}^j + \sum_{j \in -b}\dpara_s^j$.

Thus, the best response set for {\buyer} $b$ is always a singleton which in turn
implies it is always
contractable. Further, each best response mapping $\text{BR}_b
\colon \dpara^{-b} \mapsto \dpara^b$ is continuous in $\dpara^{-b}$. 
Hence,
\begin{equation}
    \min_{\dpara^b} L^b(\dpara^b, \dpara^{-b} ) = L^b(\text{BR}_b(\dpara^{-b}) ,
    \dpara^{-b} )
\end{equation}
and thus the mapping $\dpara^{-b }\mapsto \min_{\dpara^b} L^b(\dpara^b,
\dpara^{-b} )$ is continuous since $L^b$ and $\text{BR}_b$ are continuous and the
composition of continuous maps is continuous. Thus, by Theorem~\ref{thm:gne_existence} there exists a GN equilibrium for the reduced game between the {\buyers}. As in the proof of
Theorem~\ref{thm:nash}, we see that for any collection of
$\dpara$
parameters that constitute a GN equilibrium for this simplified
game, any collection of $c$ parameters that lie in the convex polytope defined in the statement of the theorem
constitute a GN equilibrium for the full game. 
\end{proof}

We now remark on why there are always GN equilibria to the game between the
{\buyers}
when each source can exert a finite amount of effort (i.e.~Theorem
\ref{thm:nash_constrained}),
but there may not be a GN equilibrium when the sources are allowed to exert
infinite effort (i.e.~Theorem \ref{thm:nash}).
Referring to~\eqref{eq:eq_conditions}, consider the case where $\rho(\Xi) = k <
1$ and GN equilibria exist. Suppose we replace $\Xi$ in~\eqref{eq:eq_conditions} with
$\alpha \Xi$, where $\alpha >1$, and note that $\rho(\alpha \Xi) = \alpha
\rho(\Xi)$. As $\alpha$ is increased towards $\frac{1}{k}$,
the matrix ${(I - \alpha \Xi)}$ gets closer to becoming singular, and the
corresponding solution to the system of equations, $a$, approaches infinity.
This implies 
 the data sources exert an infinite amount of effort in equilibrium. 

Intuitively, this corresponds to the coupling between the payment contracts
approaching some critical limit, past which point no GN equilibria  
exist. However, in the case where the data sources are constrained to exert a
finite amount of effort, this {run-away} behavior
is not possible, and GN equilibria always exist.  While the constraints bounding the effort the data
sources may exert ensure the existence of GN equilibria, their activation and inactivation may lead to a degeneracy in how much effort each data source
exerts in equilibrium. 

Note that, in Theorem \ref{thm:nash} and Theorem \ref{thm:nash_constrained}, we
assume that either all of the data sources are constrained in their effort,
or all unconstrained in their effort. In the case
where some data sources are constrained and others unconstrained, the task of
determining existence of equilibria to the game becomes a combinatorial
endeavor. We exclude the analysis of this case, as it is largely an algebraic
exercise and lends little insight to the broader problem.

\subsection{Conditions for Social Inefficiency}
\label{subsec:poa}

The question of equilibrium \emph{efficiency} or \emph{quality} arises naturally
in game theoretic settings. 
In this section, we identify necessary and sufficient conditions under which the equilibria are socially inefficient; as we will see in Theorem~\ref{thm:always_lose}, as soon as any non-diagonal $\xi$ parameter is non-zero, there will be social inefficiency.

Due to the wide variety of estimators data {\buyers} can use, as well as effort-to-variance functions that characterize data sources, it is difficult to provide interesting general bounds on the price of anarchy, a widely used metric for the inefficiency of equilibria~\cite{roughgarden2007}. However, when both are specified, the price of anarchy can be explicitly calculated~\cite{westenbroek:2017aa}.

In this section, we will focus our attention on the case where the data sources are unconstrained in the effort they exert, i.e.~when $\mc{E}_s$ is unbounded, as the results in this case provide a clearer intuition for how our chosen class of mechanisms give rise to inefficiencies. However, in the interest of completeness, in Appendix~\ref{sec:poa_bounded_case} the result is extended to the case where data sources are effort-constrained. 

Let us denote by $\bs{e}$ the vector denoting the level of effort the data sources
exert. The social cost is defined as the sum of the cost experienced by all parties. 
 \begin{definition}[Ex-ante Social Cost]
\label{def:social_loss}
Suppose 
that  $\eta^j = 1$ for each {\buyer} $j\in \mc{B}$. We define the {ex-ante social cost} to be the sum of
the utility functions of all the data {\buyers} and data sources---that is:
\begin{align}
\textstyle\mathcal{L}(\bs{e}) &= \textstyle\sum_{j \in \mathcal{B}} \big(
\E \big[ 
\big( \hat{f}^j_{\mc{X}_j}(x^\ast) - f(x^\ast) \big)^2 -\notag\\
&\textstyle\quad\sum_{k \in -j} \comp_k^j \big( \hat{f}^k_{\mc{X}_k}(x^\ast) -
f(x^\ast) \big)^2
\big]
\big) + \sum_{s \in \mc{S}} e_s
\end{align}
\end{definition}

Note that this sum does not include any of the payments made in the marketplace, as they are simply lossless transfers of wealth. 
The fact that $\eta^j = 1$ for each $j\in
\mc{B}$  ensures that these transfers of wealth are lossless from a utility
perspective---i.e.~the {\buyers} and sources value the payment equally. However,
this is simply a rescaling and, more importantly, it allows us to isolate the social loss due to the mechanism, and ignore any losses due to differential preferences in payment currency. Indeed, note that the social cost only depends on the effort exerted by the sources.  

The following result states that there is always a unique level of effort that minimizes the social cost. 
\begin{lemma}\label{lemma:convex_social}
    Suppose that
$\gamma_i^j > 0$,  $\forall j \in \mathcal{B}$ and $\forall i\in \mc{S}$. There
is a unique minimizer of $\mc{L}(\bs{e})$.
\end{lemma}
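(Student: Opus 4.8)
The plan is to collapse the social cost into a separable sum of single-variable, strictly convex functions, one per data source, and then invoke the existence-and-uniqueness argument already established for the sources' effort-selection problem in Section~\ref{subsec:sourcegame}.

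First I would simplify $\mathcal{L}(\bs{e})$. By the separability of Assumption~\ref{ass:estimator_sep}, each buyer's expected squared estimation error decomposes as $\E[(\hat f^j_{\mc{X}_j}(x^\ast)-f(x^\ast))^2]=\sum_{i\in\Sc}\beta_i^j\,\sigma_i^2(e_i)$, with $\beta_i^j=h_j(x_i,x,F_j)$ as in~\eqref{eq:betaparam}. Substituting this for both the estimation-error and the competition terms in Definition~\ref{def:social_loss} and collecting coefficients via the definition of $\gamma_i^j$ in~\eqref{eq:gamma}, the buyer-dependent part of the social cost becomes $\sum_{j\in\Bc}\sum_{i\in\Sc}\gamma_i^j\,\sigma_i^2(e_i)$. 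Swapping the order of summation and using $\bs{\gamma}_s=\sum_{j\in\Bc}\gamma_s^j$ from~\eqref{eq:gamma_tot} yields the separable form
\[
\mathcal{L}(\bs{e})=\sum_{s\in\Sc}\bigl(\bs{\gamma}_s\,\sigma_s^2(e_s)+e_s\bigr).
\]
The payment terms do not appear, since by construction they are lossless transfers and are omitted from the social cost.

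Because the objective decouples across sources, it suffices to show that each summand $\phi_s(e_s)=\bs{\gamma}_s\,\sigma_s^2(e_s)+e_s$ has a unique minimizer over $\mc{E}_s=\R_{\geq 0}$. The key observation is that $\phi_s$ is, up to an additive constant and a sign, exactly the data source's effort-selection objective~\eqref{eq:effort1} with the aggregate incentive $\bs{a}_s$ replaced by $\bs{\gamma}_s$. Since the hypothesis $\gamma_s^j>0$ for all $j$ gives $\bs{\gamma}_s>0$, the strict concavity argument of Section~\ref{subsec:sourcegame} applies verbatim (equivalently, $\phi_s$ is strictly convex), so the problem admits a unique global minimizer, namely $\mu_s(\bs{\gamma}_s)$; the unique social minimizer is then $\bs{e}^\ast=(\mu_s(\bs{\gamma}_s))_{s\in\Sc}$. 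Alternatively, and more self-containedly, I can argue directly: differentiating twice gives $(\sigma_s^2)''=2\bigl((\sigma_s')^2+\sigma_s\sigma_s''\bigr)$, and under Assumption~\ref{ass:sigma_form} we have $\sigma_s\geq 0$, $\sigma_s''\geq 0$, while strict monotonicity with convexity forces $\sigma_s'<0$ everywhere, so $(\sigma_s^2)''>0$ and $\sigma_s^2$ is strictly convex; with $\bs{\gamma}_s>0$ the function $\phi_s$ is then strictly convex, and the linear term $+e_s$ makes it coercive on $\R_{\geq 0}$, so a minimizer exists and is unique.

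The main obstacle is the algebraic reduction in the first step, namely matching the competition and estimation terms so that they collapse into the single coefficient $\bs{\gamma}_s$ and confirming that the resulting objective is genuinely separable across sources. Once separability is in hand, uniqueness is essentially immediate, either from the earlier characterization of the source's problem or from elementary strict convexity. The positivity hypothesis $\gamma_i^j>0$ enters only to guarantee $\bs{\gamma}_s>0$, which is what ensures $\mu_s(\bs{\gamma}_s)$ is well-defined and the per-source minimization is strictly convex.
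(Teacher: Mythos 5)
Your proof is correct and follows essentially the same route as the paper's: both rewrite the social cost in the separable form $\sum_{s\in\mc{S}}\bigl(\bs{\gamma}_s\,\sigma_s^2(e_s)+e_s\bigr)$ and conclude uniqueness from strict convexity of each per-source term (the paper phrases this as positive definiteness of the diagonal Hessian of $\mc{L}$). Your additional coercivity remark, which secures \emph{existence} of the minimizer on the unbounded set $\R_{\geq 0}$, is a point the paper leaves implicit, but it does not change the substance of the argument.
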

\begin{proof}
The ex-ante social cost can be re-written as $\mathcal{L}\left(\bs{e}\right) = \sum_{i \in \mathcal{S} }\sum_{j \in \mathcal{B}} \gamma_i^j \sigma_i^2(e_i) + \sum_{i \in \mathcal{S}} e_i$. 
By Assumption~\ref{ass:sigma_form} and the assumption that $\gamma_\src^j > 0$
for some $j\in \mc{B}$, $D_{e_\src}^2\mc{L}(\bs{e})$ is strictly positive.
In addition,
$D_{e_\src} \mathcal{L}(\bs{e})$ does not depend on
$e_{i}$ for $i \in \mc{S}\setminus\{\src\}$.

Hence, $D^2_{\bs{e}}\mathcal{L}$ has positive entries on the diagonal and entries
    of zero everywhere else so that it is positive definite which, in turn,
    implies $\mathcal{L}$
    is strictly convex. Thus, since $\mathcal{E}$ is a convex set, $\mc{L}$ has a unique minimizer on
    $\mathcal{E}$. 
\end{proof}

The {price of
anarchy}  is defined
 for each Nash equilibrium as the ratio of the {social cost} under the Nash
 equilibrium to the socially optimal cost.
 \begin{definition}[Ex-ante Price of Anarchy]
     The {ex-ante price of anarchy} is given by $\PoA(\bs{e}) =\frac{\mathcal{L}(\bs{e})}{\mathcal{L}(\hat{\bs{e}})}$, 
where $\hat{\bs{e}} \in \mathcal{E}$ is the minimizer of $\mc{L}$.
\end{definition}

Since $\hat{\bs{e}}$ is unique minimizer of $\mathcal{L}$, for all
$\bs{e} \neq \hat{\bs{e}}$ we must have that $\PoA(\bs{e}) > 1$.
Intuitively, the larger $\PoA(\bs{e})$, the more socially inefficient the
solution is. With this metric in mind, we provide necessary and sufficient conditions
for the game between the {\buyers} to yield a socially efficient outcome, in the case that the effort space for each data source is unbounded. 

\begin{theorem}
\label{thm:always_lose}
Suppose $\gamma_i^j> 0$,  $\forall i \in \mc{S}$, $\forall j \in \mc{B}$. Further suppose that $\mc{E}_s = [0,\infty)$ for each $s \in \mc{S}$.
Then, there exists a GN equilibrium to the game $\{P^b{ (\cdot)} \}_{\buy\in
    \mc{B}}$ between the {\buyers} for which the price of anarchy is equal to one if and only if
for each $j \in \mc{B}$ and each $i,l\in \mc{S}$ such that $i \neq l$ we have that $\xi_{i,l}^j =0$.
\end{theorem}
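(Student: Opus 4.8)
The plan is to compare the unique equilibrium effort profile with the socially optimal one, exploiting that both solve the \emph{same} first-order condition \eqref{eq:effort_d}, and then to translate the statement ``equilibrium effort equals social optimum'' into a condition on the $\xi$ parameters through the equilibrium equation \eqref{eq:optimal_d}.

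First I would pin down the social optimum. From the proof of Lemma~\ref{lemma:convex_social}, $\mc{L}(\bs{e}) = \sum_{i\in\mc{S}}\bs{\gamma}_i\sigma_i^2(e_i) + \sum_{i\in\mc{S}}e_i$ is separable and strictly convex, so its minimizer decouples across sources and each $\hat{e}_s$ satisfies $2\bs{\gamma}_s\sigma_s(\hat{e}_s)\frac{d}{de_s}\sigma_s(\hat{e}_s) + 1 = 0$. This is precisely \eqref{eq:effort_d} with $\bs{\gamma}_s$ in place of $\bs{a}_s$; hence $\hat{e}_s = \mu_s(\bs{\gamma}_s)$, where the running hypothesis $\bs{\gamma}_s \ge \underline{\bs{a}}_s$ inherited from Theorem~\ref{thm:nash} places the optimum in the interior and inside the domain $\mc{A}_s$ of $\mu_s$. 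On the other hand, every GN equilibrium drives source $s$ to $\mu_s(\bs{a}_s)$, and by Lemma~\ref{lemma:increasing_effort} the map $\mu_s$ is strictly increasing, hence injective on $\mc{A}_s$. Therefore $\PoA=1$ if and only if $\bs{a}_s = \bs{\gamma}_s$ for every $s\in\mc{S}$. Since Theorem~\ref{thm:nash} guarantees that all GN equilibria share the same $a$ (and thus the same effort), ``there exists a GN equilibrium with $\PoA=1$'' is equivalent to this single requirement $\bs{a}_s = \bs{\gamma}_s$ for all $s$.

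Next I would invoke the equilibrium characterization \eqref{eq:optimal_d}, namely $a_s^b = \gamma_s^b + \sum_{j\in-b}\sum_{l\in-s}a_l^j\xi_{l,s}^j$, and sum it over $b\in\mc{B}$. Since each pair $(j,l)$ with $l\neq s$ contributes once for every $b\neq j$, this collapses to
\begin{equation*}
\textstyle \bs{a}_s = \bs{\gamma}_s + (M-1)\sum_{j\in\mc{B}}\sum_{l\in-s} a_l^j\,\xi_{l,s}^j .
\end{equation*}
For the ``if'' direction, suppose $\xi_{i,l}^j=0$ whenever $i\neq l$. Then $\Amat=0$ in \eqref{eq:eq_conditions}, so $\rho(\Amat)=0<1$ and Theorem~\ref{thm:nash} yields equilibria with $a=\gamma$; thus $\bs{a}_s=\bs{\gamma}_s$ for all $s$, giving $\PoA=1$. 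For the ``only if'' direction, suppose an equilibrium with $\PoA=1$ exists, so $\bs{a}_s=\bs{\gamma}_s$ for every $s$. The displayed identity then forces $(M-1)\sum_{j\in\mc{B}}\sum_{l\neq s} a_l^j\xi_{l,s}^j = 0$, and since every summand is non-negative, each must vanish.

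The decisive point, and the step I would treat most carefully, is concluding $\xi_{l,s}^j=0$ rather than merely $a_l^j\xi_{l,s}^j=0$. This requires strict positivity of the equilibrium $a$ parameters: reading off \eqref{eq:optimal_d} with $\gamma_l^j>0$ gives $a_l^j \ge \gamma_l^j > 0$, so $a_l^j\xi_{l,s}^j=0$ indeed forces $\xi_{l,s}^j=0$ for all $l\neq s$ and all $j$, and as $s$ was arbitrary this is exactly the claim. It also tacitly uses $M\ge 2$, so that some $b\neq j$ exists and the coupling term actually survives in the summed identity; with a single buyer the off-diagonal coupling never propagates to $\bs{a}_s$ and the equivalence would fail. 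I would additionally flag the domain bookkeeping noted above, ensuring $\bs{\gamma}_s$ lies in $\mc{A}_s$ so that $\hat{e}_s = \mu_s(\bs{\gamma}_s)$ and the injectivity of $\mu_s$ may legitimately be applied.
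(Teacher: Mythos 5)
Your proof is correct and follows essentially the same route as the paper's: both reduce efficiency to $\bs{\dpara}_s = \bs{\gamma}_s$ for all $s$ via the shared first-order condition and the strict monotonicity of $\mu_s$ (Lemma~\ref{lemma:increasing_effort}), then sum the equilibrium condition \eqref{eq:optimal_d} over buyers and use $\dpara_l^j \geq \gamma_l^j > 0$ to force the off-diagonal $\xi$ parameters to vanish. Your write-up is somewhat more explicit than the paper's---you spell out the $(M-1)$ collapse of the triple sum, treat both directions of the equivalence separately, and flag the $M \geq 2$ and domain caveats---but the underlying argument is identical.
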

\begin{proof}
    When there is no GN equilibria of the {\buyers}'
    game, the
proof is trivial. On the other hand, when there is GN equilibria
we have that
\begin{equation}\label{eq:social_effort}
\textstyle    D_{e_\src} \mathcal{L}(\bs{e})=
    2\bs{\gamma}_\src\sigma_\src(\hat{e}_\src)\frac{d}{de_\src}\sigma_\src(\hat{e}_\src) + 1 = 0.
\end{equation}
Since $\mc{L}$ is strictly convex, solving the first order conditions in
\eqref{eq:social_effort} yields the global minimizer. 
Just as with~\eqref{eq:effort_d}, the solution to~\eqref{eq:social_effort},
$\hat{e}_\src\in \mb{R}_+$, is
implicitly defined by $\hat{\mu}_\src:
\bs{\gamma}_\src\mapsto \hat{e}_\src$. 

Moreover, at a GN equilibrium,
we have $2\bs{a}_s \sigma_s(e_s)\frac{d}{de_s}\sigma_s(e_s)+1=0$.
Thus, as a consequence of
Lemma~\ref{lemma:increasing_effort}, the data sources  exert $\hat{e}_\src\in \mb{R}_+$ 
if and only if $\bs{\dpara}_i =
\bs{\gamma}_i$ for all $i \in \mathcal{S}$. 
By the proof of Theorem \ref{thm:nash} an equilibrium choice of the parameters $(c,a)$ must satisfy:
\begin{equation}
   \textstyle \bs{\dpara}_i= \bs{\gamma}_i +  \sum_{j \in \mathcal{B}} \sum_{k
   \in -j}\sum_{l \in
        -i}\dpara_l^k \xi_{l,i}^k
\end{equation}
Thus, we will have
$\bs{\dpara}_i =
\bs{\gamma}_i$ if and only if:
\begin{equation}
  \textstyle  0=\sum_{j \in \mathcal{B}} \sum_{k \in -j}\sum_{l \in
        -i}\dpara_l^k \xi_{l,i}^k
        \label{eq:zero}
\end{equation}
However, $\dpara_s^b> \gamma_s^b>0$ for some $s\in \mc{S}$ and $b \in \mathcal{B}$ 
since in equilibrium we have $a_\buy^s=\gamma_s^b+\sum_{j\in -b}\sum_{l\in -s}a_{l}^j\xi_{l,s}^j$
(see \eqref{eq:br_const}), and by assumption $\xi_{l,i}^j>0$ for some $j\in \mc{B}$ and some $i,l\in
\mc{S}$ with $i\neq l$. Thus, \eqref{eq:zero} cannot hold.
\end{proof}
Theorem~\ref{thm:always_lose} 
confirms the typical result that Nash
equilibria are generally (ex-ante) inefficient. However, it further shows that, in the
particular case of this framework, GN equilibria are efficient only when
there is no coupling between the payments the {\buyers} make to data sources. In this light, we find it appealing to regard the data sources as public good which the {\buyers} have incentive to exploit.  As the following result demonstrates, this problematic coupling will always arise when the aggregators utilize linear regression.

\begin{corollary}
    Suppose that each {\buyer} $b\in \mc{B}$ has estimator $\hat{f}^b$ which
    is linear regression. Further suppose that the conditions of Theorem \ref{thm:always_lose} hold.
    Then there does not exist a GNE solution to $\{P^b(\cdot)\}_{b \in \mc{B} }$  wherein the price of anarchy is equal to $1$. 
\end{corollary}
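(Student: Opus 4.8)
The plan is to use Theorem~\ref{thm:always_lose} to reduce the corollary to a statement purely about the coupling coefficients $\xi_{i,l}^b$ that linear regression induces. By that theorem, under the stated hypotheses (which include $\gamma_i^j>0$ and $\mc{E}_s=[0,\infty)$) a GN equilibrium with unit price of anarchy exists \emph{if and only if} $\xi_{i,l}^j=0$ for every $j\in\mc{B}$ and every $i,l\in\mc{S}$ with $i\neq l$. Hence it suffices to exhibit, when each $\hat{f}^b$ is ordinary least squares, a single strictly positive off-diagonal coefficient. Recalling~\eqref{eq:betaxi}, for $i\neq l$ we have $\xi_{i,l}^b=h_b(x_l,x_{-i},\delta_{x_i})$, i.e.\ the weight that source $l$'s variance contributes to {\buyer} $b$'s expected squared error when estimating $f$ at the point $x_i$ from the dataset that excludes source $i$.

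First I would compute $h_b$ explicitly. Writing $f(x)=x^{\top}\theta$, letting $X_{-i}$ be the design matrix whose rows are $\{x_l^{\top}\}_{l\in-i}$, and setting $G_{-i}=X_{-i}^{\top}X_{-i}$ (invertible whenever the regression is well posed), the OLS estimate at the target $x_i$ is $\hat{f}^b_{\mc{X}_{-i}}(x_i)=x_i^{\top}G_{-i}^{-1}X_{-i}^{\top}y_{-i}=\sum_{l\in-i}w_l^{(i)}y_l$ with weights $w_l^{(i)}=x_i^{\top}G_{-i}^{-1}x_l$. Since $\E[\hat{\theta}]=G_{-i}^{-1}X_{-i}^{\top}\E[y_{-i}]=G_{-i}^{-1}G_{-i}\theta=\theta$, the estimate is unbiased, so the expected squared error equals the variance: $\E[(\hat{f}^b_{\mc{X}_{-i}}(x_i)-f(x_i))^2]=\sum_{l\in-i}(w_l^{(i)})^2\sigma_l^2$. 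Matching this against the separability identity~\eqref{eq:h_def} that defines $h_b$, I read off $\xi_{i,l}^b=(w_l^{(i)})^2\geq 0$.

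It then remains to rule out that all these weights vanish simultaneously. The key observation is the algebraic identity $\sum_{l\in-i}w_l^{(i)}x_l=\big(\sum_{l\in-i}x_lx_l^{\top}\big)G_{-i}^{-1}x_i=G_{-i}G_{-i}^{-1}x_i=x_i$, which is simply unbiasedness restated at the level of the weight vector. Consequently, if $x_i\neq 0$ then the weights $\{w_l^{(i)}\}_{l\in-i}$ cannot all be zero, so $\xi_{i,l}^b=(w_l^{(i)})^2>0$ for at least one $l\neq i$. Because any practical linear regression carries an intercept term, each feature vector has a unit coordinate and $x_i\neq 0$ holds automatically; thus the off-diagonal coupling is genuinely nonzero, the equivalent condition of Theorem~\ref{thm:always_lose} fails, and no price-of-anarchy-one equilibrium can exist.

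The step I expect to require the most care is the passage from the separability assumption to the explicit form $\xi_{i,l}^b=(w_l^{(i)})^2$: it relies on OLS being exactly unbiased, so that no bias term contaminates the variance decomposition~\eqref{eq:h_def}, and on $G_{-i}$ being invertible, which is implicitly guaranteed by the requirement that each {\buyer}'s estimator be well defined on the dataset $x_{-i}$. A minor subtlety is the genuinely degenerate configuration in which every query $x_i$ is the zero vector; this is excluded either by the intercept convention above or by treating it as a non-generic boundary case. I would note that the same unbiasedness-based argument carries over verbatim to any linear unbiased estimator (e.g.\ weighted least squares), only changing the explicit form of the weights.
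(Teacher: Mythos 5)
Your proposal is correct and takes essentially the same approach as the paper: both reduce the corollary via Theorem~\ref{thm:always_lose} to exhibiting a single nonzero off-diagonal coupling $\xi_{i,l}^b$, and both obtain it from the structure of the leave-one-out least-squares prediction weights. The paper quotes the explicit formula for $g_b$ from \cite{cai:2015aa} (which carries the intercept, so the target vector $[x_i^T,1]$ is automatically nonzero) and rules out vanishing weights via well-definedness of the payments (non-orthogonality of $[x_i^T,1]$ to the span of the remaining rows), whereas you derive the weights and invoke the reproducing identity $\sum_{l\in -i} w_l^{(i)} x_l = x_i$ together with the intercept convention --- the same underlying fact in a slightly more explicit form.
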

\begin{proof}
It can be shown (see \cite[Footnote 6]{cai:2015aa}) that 
\begin{align*}
g_b(x_{-s}, \delta_{x_s},
    &\sigma^2_{-s}(e_{-s}))=\mb{E}_{\tilde{x}\sim \delta_{x_s}}\big[
        [\tilde{x}^T,1]\cdot(X^TX)^{-1}X^T\notag\\
    &\cdot
    \diag(\sigma^2_{-s}(e_{-s}))\cdot X(X^TX)^{-1}\cdot [\tilde{x}^T,1]^T\big]
\end{align*} 
where $X$ is the matrix whose rows are $[x_i^T, 1]$ for $i\in -s$ and
$\diag(\sigma^2_{-s}(e_{-s}))$ 
is the diagonal matrix whose $(i, i)$--th entry is $\sigma_i^2(e_i)$.
By inspection, $g_\buy(x_{-\src}, \delta_{x_\src},
\sigma^2_{-\src}(e_{-\src}))$ is ill-defined if $[x_s^T,1]$ is orthogonal to
$\mathrm{span}\{([x_{i}^T,1])_{i\in -s}\}$. This, in turn, implies the payment
contract $p_s^b$ is ill-defined. Since all the payments are
    assumed well-defined, $[x_s^T,1]$ cannot be orthogonal to
$\mathrm{span}\{([x_{i}^T,1])_{i\in -s}\}$. Thus by inspection there exists $i, l \in \mc{S}$ such that $i \neq
    l$ and $\xi_{i,l}^b > 0$. 
Thus, by Theorem~\ref{thm:always_lose}, there is no efficient equilibrium.
\end{proof}


\section{Partial Data Sharing}
\label{subsec:partial}

Thus far we have made the restrictive assumption that each of the data sources accepts payment from and provides query responses to each of the {\buyers} in the data market. We have done this primarily to ease the introduction of the notation needed to state and prove our main results. In this section, we remove this assumption and analyze the case where each of the data sources only accepts payment contracts from and provides query responses to a subset of the data {\buyers}. That is, we now assume that prior to the first stage outlined at the beginning of Section \ref{sec:datam_formulation} each data source $s \in \mc{S}$ has agreed to accept the incentives issued by some subset of the buyers $\mc{B}_s \subset \mc{B}$. As we shall see, these changes do not alter our previous analysis significantly. When {\buyer} $b \in \mc{B}$ only  purchases data from a subset of the data sources, the primary difference is that $b$ now has fewer contract parameters $(c^b,a^b)$ and fewer IR and non-negativity constraints in his optimization. The removal of these terms and constraints reduces the dimensionality of the degeneracy observed in the equilibria of the datamarket, but the overall structure of our analysis changes little. In order to demonstrate this point, in this section we will focus on demonstrating in some detail how a result analogous to Theorem \ref{thm:nash} can be obtained in this setting.

In practice, the mechanism by which data sources choose which incentives to accept and which {\buyers} to work with could be quite complicated. Although this is an important and interesting point for future research, in this paper we will assume that the sets $\mc{B}_s$ are exogenously given for all $s$. Our purpose here is to show that for any exogenously fixed assignments of data sources to data {\buyers}, 
the degeneracies we highlighted earlier remain whenever at least one data source receives payment from and provides data to multiple {\buyers}, i.e.~whenever the non-rivalrous nature of data has an effect on the data market.\footnote{An interesting question for future work is the study of how these assignments $\mc{B}_s$ would come to be in real-world settings, as well as the identification of socially desirable assignments. Our results here provide evidence that it will likely be difficult to find these desirable assignments.} Throughout the section we will outline how the formulation we have considered must be modified to fit this more general setting, and discuss how the results we have presented thus-far carry through. As we shall see, the generalization is rather straightforward, and thus some details are omitted in the interest of brevity.

Once it has been decided which data sources will provide data to which {\buyers}, the interactions of the data market proceed as before. Each {\buyer} $b \in \mc{B}$ issues incentives $p^b = (p_s^b)_{s \in \mc{S}_b}$ of the form \eqref{eq:paystructure} to the members of $\mc{S}_b$, then each data source $s \in \mc{S}$ evaluates the payments $p_s =(p_s^j)_{j \in \mc{B}_s}$, decides what level of effort to exert when producing $y_s$ and then shares this reading with the members of $\mc{B}_s$. Each {\buyer} then processes the data she has received to construct her estimate for $f_b$, issues payments $p^b$, and incurs loss $L^b$. 

Note that we have abused notation in redefining $p_s$ and $p^b$ above to only reflect the subset of payments that are issued in this section. Similar abuses will follow as  we redefine a number of objects from earlier in the document to be appropriate for this new setting. Roughly speaking, each of these items will be redefined by replacing $\mc{B}$ with $\mc{B}_s$ and $\mc{S}$ with $\mc{S}_b$ where appropriate. For example, the buyers now need only to select the parameters $c ^b = (c_s^b)_{ s \in \mc{S}_b}$ and $a^b = (a_s^b)_{s \in \mc{S}_b}$ when issuing incentives. We will omit the details of some of these changes when context makes our meaning clear.

We may now model the utility for each data source $s$ by: 
\begin{equation}
 u_s(e_s, p_s)=
    \E  \Big( \sum_{j \in \mc{B}_s} p_s^j( y^j(e)) \Big) - e_s
\end{equation}
and the loss for each {\buyer} $b$ by:
\begin{align*}
\textstyle L^\buy({p^\buy}, e)  &=\textstyle
\E \Big[ 
    \big( \hat{f}^\buy_{\mc{X}_\buy}(x^\ast) - f(x^\ast) \big)^2 \notag \\
    & \textstyle \quad -\sum_{j \in -\buy} \comp_j^\buy \big( \hat{f}^j_{\mc{X}_j}(x^\ast) - f(x^\ast) \big)^2 \\
    & \textstyle \quad + \eta^\buy \sum_{\src \in \mc{S}_b} p_\src^\buy(y^b(e)) \notag
\Big]
\end{align*}
where we now adopt the convention that $\mc{X}^b = (x_s,y_s)_{s \in \mc{S}_b}$ and $y^b(e) = (y_s(e_s))_{s \in \mc{S}_b}$. 

Letting $x^b = (x_i)_{i \in \mc{S}_b}$, the expected value of the payment $p_s^b$ can now be calculated as 
\begin{align*}
\mathrm{p}_\src^\buy& ( (c_\src^b, \dpara_\src^b), e ) 
= \E [ p_s^b(y^b(e)) ] \notag \\
& =
c_\src^\buy-\dpara_\src^\buy\big(
\sigma_\src^2(e_\src) 
+ g_\buy(x^b_{\mc{S}_b \setminus \{s\}}, \delta_{x_\src},
  (\sigma_i^{2}(e_i))_{i \in \mc{S}_b \setminus \{s\}} \big)
\end{align*}
and the expected total payment $s$ receives will be ${\mathrm{p}_\src((c_\src,\dpara_\src),e)=\textstyle\sum_{b\in
        \mc{B}_s}\mathrm{p}_\src^\buy((c_\src^b,\dpara_\src^b),e)}$. With these refactored definitions, the individual rationality, non-negativity and incentive compatibility constraints on the buyers' optimization problems are still given by equations \eqref{eq:IR_conditions}, \eqref{eq:IR_conditions2} and \eqref{eq:IC}, respectively. 

Now letting $\textstyle\bs{\dpara}_\src = \sum_{j \in \mathcal{B}_s} \dpara_\src^j$,
it is straightforward to show that the first-order optimality condition in \eqref{eq:effort_d} holds, and the ensuing analysis in Section \ref{subsec:sourcegame} pulls through. That is, one can show that each selection of parameters $a = (a^b)_{b \in \mc{B}}$ still induces a game between the data sources for which there is a dominant strategy equilibrium. Moreover, for each $s \in \mc{S}$, there exists an implicitly defined map $\mu_s \colon \R_{\geq 0} \to  \R_{\geq 0}$ which returns the equilibrium level of effort $e_s^*$ for each choice of $\textstyle\bs{\dpara}_\src $. The constants $\underline{\bs{a}}_s$, $\overline{\bs{a}}_s$ and the set $\mc{A}_s$ can also be redefined for this setting in a natural way. 

Following steps similar to those in Section  \ref{subsec:reformulation}, the loss for {\buyer} $b$ can now be written as:
\begin{align}
&L^\buy((c^\buy,\dpara^\buy), (c^{-\buy},\dpara^{-\buy}) ) =
\textstyle\sum_{i \in \Sc_b} \gamma_i^\buy \sigma_i^2(\mu_i(\bs{\dpara}_i)) \notag\\ 
&\quad + \sum_{i \in \mc{S}\setminus \Sc_b} \gamma_i^\buy \sigma_i^2(\mu_i(\bs{\dpara}_i)) \notag\\
&\quad\textstyle  +\eta^\buy \sum_{i \in \Sc_b}
\left( c_i^\buy
 - \dpara_i^\buy \left[ \sum_{l \in
\Sc_b}  \xi_{i,l}^\buy \sigma_l^2(\mu_l(\bs{\dpara}_l))  \right] \right)\label{eq:cost2}
\end{align}
where we define
${\beta_s^b = h_b(x_s,x^b,F_b)}$ and then define
${\gamma_s^\buy =\textstyle \beta_s^\buy - \sum_{j \in \mc{B}_s\setminus \{b\}} \comp_j^\buy \beta_s^j}$ for each $b \in \mc{B}$ and $s \in \mc{S}_b$.
Note that if $s \not \in \mc{S}_b$, then we do not need to define the constant $\beta_s^b$, since the query response that $s$ produces does not factor into the estimator that {\buyer} $b$ constructs for $f$. On the other hand, we do need to define $\gamma_s^b$ for each $s \in \mc{S}$ and $b \in \mc{B}$, since we have assumed that $s$ has agreed to provide at least one {\buyer} with the reading $y_s$. However, we note that the second term on the right hand side of \eqref{eq:cost2} does not depend on any of {\buyer} $b$'s decision variables.

Similarly, we only need to define the parameter $\xi_{i,l}^b$ if both $i,l \in \mc{S}_b$. In the case that data sources $i$ and $l$ both accept payment from {\buyer} $b$ we then define:
\begin{equation}
  \xi_{i,l}^b=  \left\{\begin{array}{ll}
          h_b(x_l, x_{-i}^b, \delta_{x_i}) & \ i\neq l \\ 
          1 &\ i = l
          \end{array}\right.          
\label{eq:betaxi}
\end{equation}

Applying the preceding analysis,  {\buyer} $\buy$'s optimization problem can now be re-written as: 
\begin{align}
\min_{(c^\buy,a^\buy)}
& \ L^\buy( (c^\buy,\dpara^\buy), (c^{-\buy},\dpara^{-\buy}) ) \notag \\
\text{s.t.} & \sum_{j \in \Bc_s} \left[c_\src^j - \dpara_\src^j \left( \sum_{i \in \Sc_j} \xi_{\src,i}^j
\sigma_i^2(\mu_i(\bs{\dpara}_i)) \right)\right]\geq \mu_\src(\bs{\dpara}_\src), \notag \\
& \qquad \qquad \qquad \qquad\qquad\qquad\qquad\qquad\qquad\forall s \in \mc{S}_b
\notag
\\
& c_\src^\buy - \dpara_\src^\buy \left( \sum_{i \in \Sc_b} \xi_{\src,i}^\buy
\sigma_i^2(\mu_i(\bs{\dpara}_i)) \right)\geq 0, ~ \forall s \in \mc{S}_b
\label{eq:opt_p2}
\\
& \bs{\dpara}_\src \in \mc{A}_\src, ~ \forall s \in \mc{S}_b \quad \quad \quad \dpara_\src^\buy \geq 0, ~ \forall s \in \mc{S}_b \notag
\end{align}
where $L^b$ is now defined by \eqref{eq:cost2}. Note that the optimization facing {\buyer} $b$ in \eqref{eq:opt_p2} is quite similar to the optimization in \eqref{eq:opt_p}, save the modifications to some IR and non-negativity constraints. As we shall see in the statement of Theorem \ref{thm:nash3} and Corollary \ref{corollary1} below, when there are GN equilibria in the game between the buyers, the removal of some of these constraints will affect the degeneracy previously seen in the $c$ parameters.

Before stating our generalization to Theorem \ref{thm:nash} we redefine some final notation. First, we define:  $\textstyle  \bs{\gamma}_\src = \sum_{j \in \mathcal{B}_s}\gamma_\src^j$, where we emphasize that $\textstyle  \bs{\gamma}_\src$ does not depend on $\gamma_s^b$ if $ b \notin \mc{B}_s$, since this term will fall out when characterizing the optimality condition for {\buyer} $b$, and not affect the existence of GN equilibria. We then define for each $s \in \mc{S}$ and $b \in \mc{B}$: 
$q_\src^\buy(\dpara)=  \dpara_s^\buy 
\left( 
\sum_{i \in \Sc_b}
\xi_{\src,i}^j \sigma_i^2(\mu_i(\bs{\dpara}_i))
\right)$ and 
$q_\src(\dpara) = 
\sum_{j \in \Bc_s} q_\src^j(\dpara) 
+\mu_\src(\bs{\dpara}_\src)$.

\begin{theorem}
\label{thm:nash3}
Consider the game  $\{P_b(\cdot)\}_{b\in \mc{B}}$ where each {\buyer}'s objective is to solve the optimization
in~\eqref{eq:opt_p2}. Suppose that for each $\src\in\mc{S}$,
$\mathcal{E}_\src = \R_{\geq 0}$, $\bs{\gamma}_\src \geq
\underline{\bs{\dpara}}_\src$. Further, suppose that 
 $\gamma_s^b > 0$,  for each $s$ and $b$ such that $s \in \mc{S}_b$. If there exists a GN equilibrium $(\bar{c},\bar{a})$ then the following conditions hold:
\begin{enumerate}
\item The set of GN equilibria in the game is given by
\begin{align*}
\textstyle    \big\{(c,a) : a =\bar{a},  \  
\bs{c}_s = q_\src(\bar{a}),\ c_s^b  \geq q_{\src}^\buy(\bar{\dpara}),  \ \forall s, \forall b \in \mc{B}_s\}
\end{align*}
That is, the $a$ parameters selected by the {\buyers} are the same across each GN equilibrium, and for each $s \in \mc{S}$ the equilibrium $c_s = (c_s^b)_{b \in \mc{B}_s}$ parameters lie in the $|\mc{B}_s|$-dimensional polytope defined above.
\item The effort exerted by each data source is the same in each GN equilibrium and the efforts constitute a unique induced dominated strategy equilibrium
between the data sources. More precisely, each data source exerts effort $\mu_\src(\bar{\bs{\dpara}}_\src)$ in each GN equilibrium.
\end{enumerate}
\end{theorem}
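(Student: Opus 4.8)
The plan is to reproduce the proof of Theorem~\ref{thm:nash} with the buyer index set $\mc{B}$ and source index set $\mc{S}$ replaced by $\mc{B}_s$ and $\mc{S}_b$ wherever a buyer or source is summed over. Since the statement only characterizes the structure of equilibria \emph{assuming} one exists, I do not need the existence half of the Leontief argument, only its uniqueness half. First I would establish the partial-sharing analogue of Lemma~\ref{lem:datam_eq_holds}: at any GN equilibrium the IR constraint of every source $s$ binds. The argument is verbatim the one in Lemma~\ref{lem:datam_eq_holds}: if IR were slack for some $s$, then the non-negativity constraint of at least one buyer $b \in \mc{B}_s$ would also be slack, and that buyer could lower $c_s^b$ to strictly reduce $L^b$ without violating any constraint, contradicting equilibrium. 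This yields $\bs{c}_s = q_s(a)$ for every $s$, where the defining sum now runs over $\mc{B}_s$.

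Next I would substitute the binding constraint to eliminate the $c$ dependence from each buyer's objective, exactly as in the passage producing $\tilde{L}^b$ in the proof of Theorem~\ref{thm:nash}, and decompose the resulting cost over sources. Differentiating the per-source term with respect to $a_s^b$ for $s \in \mc{S}_b$, using $\xi_{s,s}^j = 1$ and the first-order source condition \eqref{eq:effort_d}, gives, via Lemma~\ref{lemma:increasing_effort} (which guarantees $D_{\bs{a}_s}\mu_s > 0$), the sign pattern that identifies the unconstrained minimizer as
\begin{equation*}
a_s^b = \gamma_s^b + \sum_{j \in \mc{B}_s \setminus \{b\}} \sum_{l \in \mc{S}_j \setminus \{s\}} a_l^j \, \xi_{l,s}^j.
\end{equation*}
Collecting these over all pairs $(s,b)$ with $s \in \mc{S}_b$ gives a Leontief system $a = \Xi a + \gamma$ on the reduced set of $a$ variables. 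Because a GN equilibrium is assumed to exist, this system has a non-negative solution, so by the cited Leontief result $\rho(\Xi) < 1$, whence $I - \Xi$ is invertible and the solution is unique; hence $a = \bar{a}$ across all GN equilibria. Through $\mu_s$ this fixes the common effort $\mu_s(\bar{\bs{a}}_s)$ and, together with the dominant-strategy property inherited from Section~\ref{subsec:sourcegame}, establishes part~2. I would also verify, as in Theorem~\ref{thm:nash}, that summing the first-order conditions gives $\bs{a}_s \geq \bs{\gamma}_s \geq \underline{\bs{a}}_s$, so the constraint $\bs{a}_s \in \mc{A}_s$ holds, while $a_s^b \geq 0$ holds since $\gamma_s^b > 0$ and the remaining terms are non-negative.

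Finally, since the reduced objective no longer depends on any $c_s^b$, any $c$ satisfying the binding equalities $\bs{c}_s = q_s(\bar{a})$ together with the non-negativity bounds $c_s^b \geq q_s^b(\bar{a})$ solves each buyer's problem and is therefore a GN equilibrium; this is precisely the per-source polytope asserted in part~1, whose parameters now range only over $b \in \mc{B}_s$, reflecting that only buyers in $\mc{B}_s$ issue contracts to source $s$. The one point requiring care, and the main obstacle, is the bookkeeping of the restricted index sets in the differentiation step: I must confirm that $c_s^b$ and $a_s^b$ appear only for $b \in \mc{B}_s$, that the second term of \eqref{eq:cost2} (sources outside $\mc{S}_b$) is constant in buyer $b$'s variables and so drops out of the first-order condition, and that the coupling sum correctly ranges over $\mc{S}_j \setminus \{s\}$ rather than all of $\mc{S}$. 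Everything else transfers unchanged from the proof of Theorem~\ref{thm:nash}.
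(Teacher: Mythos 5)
Your overall strategy is exactly the one the paper intends (the paper itself omits this proof, noting it is ``almost exactly the same'' as that of Theorem~\ref{thm:nash}): establish the partial-sharing analogue of Lemma~\ref{lem:datam_eq_holds}, substitute the binding IR constraints, derive per-buyer first-order conditions, and invoke only the uniqueness half of the Leontief argument. However, the first-order condition you wrote down---the very step you flag as the main point requiring care---is incorrect. After substituting the binding IR constraints (which exist only for sources $i \in \mc{S}_b$), buyer $b$'s reduced objective is
\begin{equation*}
\tilde{L}^b \;=\; \sum_{i\in\mc{S}}\gamma_i^b\,\sigma_i^2(\mu_i(\bs{\dpara}_i)) \;+\; \sum_{i\in\mc{S}_b}\Big[\mu_i(\bs{\dpara}_i) + \sum_{j\in\mc{B}_i\setminus\{b\}}\Big(\dpara_i^j\sum_{l\in\mc{S}_j}\xi_{i,l}^j\,\sigma_l^2(\mu_l(\bs{\dpara}_l)) - c_i^j\Big)\Big],
\end{equation*}
so every $\dpara$-coefficient appearing in $\tilde{L}^b$ has first index $i \in \mc{S}_b$; no term of the form $\dpara_l^j(\cdots)$ with $l\notin\mc{S}_b$ occurs anywhere in buyer $b$'s problem, because buyer $b$ neither pays source $l$ nor carries its IR constraint. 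Collecting the terms in $\sigma_s^2(\mu_s(\bs{\dpara}_s))$ and differentiating with respect to $\dpara_s^b$ therefore gives
\begin{equation*}
\dpara_s^b \;=\; \gamma_s^b \;+\; \sum_{j\in\mc{B}_s\setminus\{b\}}\;\; \sum_{l\in(\mc{S}_b\cap\mc{S}_j)\setminus\{s\}} \dpara_l^j\,\xi_{l,s}^j,
\end{equation*}
whereas your condition sums over all $l\in\mc{S}_j\setminus\{s\}$. The extra terms $\dpara_l^j\xi_{l,s}^j$ with $l\in\mc{S}_j\setminus\mc{S}_b$ are spurious: they are well-defined quantities (both $l,s\in\mc{S}_j$), but they simply do not arise in buyer $b$'s stationarity condition. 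Your formula is the naive replacement $\mc{S}\to\mc{S}_j$; the correct replacement is $\mc{S}\to\mc{S}_b\cap\mc{S}_j$, and the two coincide only in the full-sharing setting of Theorem~\ref{thm:nash}, which is why the slip is easy to miss.

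The error is local rather than conceptual: with the corrected index set, the stationarity conditions still assemble into a Leontief system $\dpara = \Amat\dpara + \gamma$ with $\Amat$ non-negative and strictly positive $\gamma$ entries on the reduced variable set, so your uniqueness argument (a non-negative solution exists by the assumed equilibrium, forcing $\rho(\Amat)<1$ and invertibility of $I-\Amat$), the verification $\bs{\dpara}_s \geq \bs{\gamma}_s \geq \underline{\bs{\dpara}}_s$, and the polytope characterization of the equilibrium $c$ parameters all go through unchanged. But as written, the linear system you solve characterizes a different point than the true equilibrium whenever some buyer $j\in\mc{B}_s\setminus\{b\}$ also pays a source that $b$ does not pay, so the proof needs this correction before it is valid.
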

The proof is almost exactly the same as the proof of Theorem~\ref{thm:nash} and is omitted here. In particular, note that the only difference we need to consider is that the {\buyers} now have fewer decision variables and fewer constraints on these decision parameters. The removal of these components manifests itself in the dimensionality of the polytope of equilibrium parameters. 
\begin{corollary}\label{corollary1}
Consider the game  $\{P_b(\cdot)\}_{b\in \mc{B}}$ where each {\buyer}'s objective is to solve the optimization
in~\eqref{eq:opt_p2}, and suppose the assumptions of Theorem \ref{thm:nash3} hold. If there exists a GN equilibrium solution $(\bar{c},\bar{a})$ then the following two statements are true:
\begin{enumerate}
\item If $|\mc{B}_s| =1$ for all $s \in \mc{S}$ then $(\bar{c},\bar{a})$ is the only GN equilibrium.
\item If there exists $s \in \mc{S}$ such that $|\mc{B}_s | \geq 2$ then there are an infinite number of GN equilibria.
\end{enumerate}
\end{corollary}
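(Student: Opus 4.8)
The plan is to read both claims directly off the equilibrium characterization supplied by Theorem~\ref{thm:nash3}, reducing each to a question about the dimension of the per-source slice of the equilibrium polytope. Theorem~\ref{thm:nash3} tells us that once a GN equilibrium $(\bar c,\bar a)$ exists, the full equilibrium set fixes $a=\bar a$ and, for every source $s$, constrains the vector $(c_s^b)_{b\in\mc{B}_s}$ to the set $\{\,\sum_{b\in\mc{B}_s}c_s^b = q_s(\bar a),\ c_s^b\ge q_s^b(\bar a)\ \forall b\in\mc{B}_s\,\}$. Since these slices are independent across sources, the equilibrium set is their Cartesian product, so it is a singleton precisely when every slice is a singleton. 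The convenient first step is the change of variables $d_s^b=c_s^b-q_s^b(\bar a)$: using $q_s(\bar a)=\sum_{b\in\mc{B}_s}q_s^b(\bar a)+\mu_s(\bar{\bs a}_s)$, the slice for source $s$ becomes the scaled simplex $\{\,d_s\in\R^{|\mc{B}_s|}_{\ge 0} : \sum_{b\in\mc{B}_s}d_s^b=\mu_s(\bar{\bs a}_s)\,\}$, whose side length is exactly the equilibrium effort $\mu_s(\bar{\bs a}_s)$.

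For statement (1), suppose $|\mc{B}_s|=1$ for all $s$. Then each simplex lives in $\R^1$ and the single equality $d_s^b=\mu_s(\bar{\bs a}_s)$ pins down the lone coordinate; equivalently $c_s^b=q_s(\bar a)$ is forced, and the remaining inequality $c_s^b\ge q_s^b(\bar a)$ is automatically satisfied because $\mu_s(\bar{\bs a}_s)\ge 0$. Hence every slice is a single point, the product over $s$ is a single point, and $(\bar c,\bar a)$ is the unique GN equilibrium.

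For statement (2), fix a source $s^\ast$ with $|\mc{B}_{s^\ast}|\ge 2$. Its slice is a simplex of dimension $|\mc{B}_{s^\ast}|-1\ge 1$ provided its side length $\mu_{s^\ast}(\bar{\bs a}_{s^\ast})$ is strictly positive; in that case the simplex, and hence the whole equilibrium set, is infinite, and any point of the simplex paired with a fixed admissible choice on the remaining sources yields a distinct GN equilibrium. It therefore remains to argue $\mu_{s^\ast}(\bar{\bs a}_{s^\ast})>0$, i.e.~that $s^\ast$ exerts positive effort in equilibrium. The equilibrium first-order conditions (the partial-sharing analogue of~\eqref{eq:optimal_d}) give $a_{s^\ast}^b=\gamma_{s^\ast}^b+(\text{non-negative coupling})$ for each $b\in\mc{B}_{s^\ast}$; summing over $b\in\mc{B}_{s^\ast}$ yields $\bar{\bs a}_{s^\ast}\ge\bs\gamma_{s^\ast}\ge\underline{\bs a}_{s^\ast}$. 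Since $\mu_{s^\ast}$ is continuous and strictly increasing with $\mu_{s^\ast}(\underline{\bs a}_{s^\ast})=0$ by Lemma~\ref{lemma:increasing_effort} and Definition~\ref{def:feasible_d}, any strict inequality $\bar{\bs a}_{s^\ast}>\underline{\bs a}_{s^\ast}$ forces $\mu_{s^\ast}(\bar{\bs a}_{s^\ast})>0$; this is secured whenever $\bs\gamma_{s^\ast}>\underline{\bs a}_{s^\ast}$ or whenever some cross-source coupling term $\xi_{l,s^\ast}^j$ is active.

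I expect the only genuine obstacle to be pinning down this strict positivity of the equilibrium effort. The characterization in Theorem~\ref{thm:nash3} guarantees only $\bar{\bs a}_{s^\ast}\ge\underline{\bs a}_{s^\ast}$, so one must rule out the degenerate boundary case $\bar{\bs a}_{s^\ast}=\underline{\bs a}_{s^\ast}$, in which $s^\ast$ exerts zero effort and its simplex collapses to the single vertex $c_{s^\ast}^b=q_{s^\ast}^b(\bar a)$ despite $|\mc{B}_{s^\ast}|\ge 2$. Everything else is pure bookkeeping: the reparametrization $d_s^b=c_s^b-q_s^b(\bar a)$, the simplex dimension count, and the fact that the per-source slices multiply independently are all immediate from the statement of Theorem~\ref{thm:nash3}.
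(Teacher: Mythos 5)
Your argument follows exactly the route the paper takes: the paper gives no separate proof of this corollary, treating both statements as immediate consequences of the polytope characterization in Theorem~\ref{thm:nash3}, which is precisely your reading. Your proof of statement (1) is complete, and the reparametrization $d_s^b = c_s^b - q_s^b(\bar{a})$ makes the dimension count transparent: each per-source slice is a simplex of side length $\mu_s(\bar{\bs{\dpara}}_s)$, the slices multiply independently, and so everything hinges on whether the equilibrium effort of a multiply-shared source is strictly positive.

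The obstacle you flag for statement (2) is genuine, and it is a gap in the paper's statement rather than in your argument. Under the stated hypotheses, $\bs{\gamma}_{s^\ast} \geq \underline{\bs{\dpara}}_{s^\ast}$ permits equality, and the cross-couplings $\xi_{l,s^\ast}^j$ may all vanish (this is exactly the efficiency condition of Theorem~\ref{thm:always_lose}; it fails for linear regression but is not excluded for general separable estimators). In that knife-edge case the equilibrium first-order conditions force $\bar{\bs{\dpara}}_{s^\ast} = \bs{\gamma}_{s^\ast} = \underline{\bs{\dpara}}_{s^\ast}$, hence $\mu_{s^\ast}(\bar{\bs{\dpara}}_{s^\ast}) = 0$, the simplex collapses to the single vertex $c_{s^\ast}^b = q_{s^\ast}^b(\bar{\dpara})$, and the equilibrium can be unique despite $|\mc{B}_{s^\ast}| \geq 2$. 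A concrete instance: a single data source ($N=1$) shared by two {\buyers} with $\gamma_1^1, \gamma_1^2 > 0$ and $\gamma_1^1 + \gamma_1^2 = \underline{\bs{\dpara}}_1$; there are no cross-coupling terms at all, so a GN equilibrium exists with $a_1^b = \gamma_1^b$, the source exerts zero effort, and the $c$ parameters are pinned to $c_1^b = \gamma_1^b\,\sigma_1^2(0)$, giving exactly one equilibrium. So statement (2) as literally written needs the additional (mild, generic) hypothesis $\bar{\bs{\dpara}}_{s^\ast} > \underline{\bs{\dpara}}_{s^\ast}$ --- guaranteed, for instance, by $\bs{\gamma}_{s^\ast} > \underline{\bs{\dpara}}_{s^\ast}$, or by some active cross-coupling into $s^\ast$ --- and with that assumption your argument closes completely. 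The same caveat silently affects the ``infinite number'' claims of Theorems~\ref{thm:nash} and~\ref{thm:nash3} themselves; your proof is the more careful one.
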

We state the previous result to emphasize that when even a single data source accepts incentives from more than a single {\buyer} an infinity of GN equilibria arise in the data market (given that any GN equilibria exist). Returning to Theorem \ref{thm:nash3}, we see complete degeneracy in the equilibrium $c$ parameters offered to any data source who sells data to multiple buyers. 

An analogous generalization to Theorem \ref{thm:nash_constrained} is also straightforward to obtain for the more general case we consider in this setting, though we omit it in the interest of brevity. The analysis conducted in the proof of Theorem \ref{thm:always_lose} also follows through in a natural way. In particular, we still observe that the first order optimality conditions for the {\buyers} will coincide with the socially efficient choice of pricing parameters if and only if each of the non-diagonal $\xi$ parameters is zero. In particular, this means that if $|\mc{B}_s| = 1$ for each $s \in \mc{S}$ the data market will achieve a socially efficient outcome, with regards to the exogenous assignment of data sources we have assumed has already occurred.




\section{Closing remarks}
\label{sec:datam_discussion}

We analyzed the strategic interactions between multiple data aggregators who share a pool of data sources. Previous work showed that a single data aggregator can find unique solutions that achieve socially efficient outcomes, but we demonstrate that the same mechanisms will break down as soon as a second data aggregator enters the market. In particular, we show that there are either no GN equilibria or infinitely many, and these solutions are frequently socially inefficient. This highlights the need for further research into mechanisms for data markets when there are multiple purchasers. In particular, there is a need for mechanisms that can simultaneously handle moral hazard \emph{and} the non-rivalrous nature of data.


\appendix
\section{Generalized Nash Equilibrium Existence Theorem}
\label{sec:gne_exist}
\subsection{GNE Existence Result}

 \begin{theorem}[Existence of GN equilibria~\cite{Park:2015aa}] \label{thm:gne_existence}
 Consider a GN equilibrium problem $\{P^b(\cdot)\}_{b\in \mc{B}}$.
 Suppose that for each $b \in \mc{B}$, the following hold: i) the correspondence $\mc{M}^b\colon \prod_{i\in-b} \Z_i
         \rightarrow \Z_b$ is
     upper semi-continuous. ii) the map $L^b$ is continuous on the graph of $\mc{M}^b$.
iii) the map $z^{-b} \mapsto \min_{z^b \in \mc{M}^b(z^{-b})} L^b(z^{b},
     z^{-b})$ is continuous.
iv) for all $z^{-b}$, the best response set, $ \text{BR}_{b}(z^{-b}) = \arg\min  \{L^b(z^b,z^{-b}) : z^b\in
 \mc{M}^b(z^{-b})\}$
 is contractable.
 
 Then, there exists a GN equilibrium.
 \end{theorem}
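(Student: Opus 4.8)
The plan is to recast the existence of a GN equilibrium as a fixed-point problem for the \emph{joint} best-response correspondence and then to apply a fixed-point theorem that tolerates non-convex images. Define $Z = \prod_{b \in \mc{B}} \Z_b$ and the correspondence $\Phi$ on $Z$ by $\Phi(z) = \prod_{b \in \mc{B}} \mathrm{BR}_b(z^{-b})$, where $\mathrm{BR}_b(z^{-b}) = \arg\min\{L^b(z^b, z^{-b}) : z^b \in \mc{M}^b(z^{-b})\}$. By Definition~\ref{def:gne}, a point $z \in Z$ is a GN equilibrium exactly when $z^b \in \mathrm{BR}_b(z^{-b})$ for every $b$, i.e.\ when $z \in \Phi(z)$. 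Hence it suffices to exhibit a fixed point of $\Phi$, and the whole task reduces to checking the hypotheses of an appropriate fixed-point theorem for $\Phi$ on a compact domain.

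First I would establish that each image $\mathrm{BR}_b(z^{-b})$ is nonempty and closed: hypothesis (ii) gives continuity of $L^b$ on the graph of $\mc{M}^b$, so once the constraint set $\mc{M}^b(z^{-b})$ is compact the Weierstrass theorem yields a minimizer and closedness of the argmin set. Next I would show that each $\mathrm{BR}_b$ is upper semicontinuous via a closed-graph argument, which is precisely what hypotheses (i)--(iii) are tailored to deliver. Take $z^{-b}_n \to z^{-b}$ and $z^b_n \in \mathrm{BR}_b(z^{-b}_n)$ with $z^b_n \to z^b$; upper semicontinuity of $\mc{M}^b$ (hypothesis (i)) together with closed values forces $z^b \in \mc{M}^b(z^{-b})$, while continuity of the value function $V^b(z^{-b}) := \min_{z^b \in \mc{M}^b(z^{-b})} L^b(z^b,z^{-b})$ (hypothesis (iii)) and of $L^b$ (hypothesis (ii)) gives $L^b(z^b, z^{-b}) = \lim_n L^b(z^b_n, z^{-b}_n) = \lim_n V^b(z^{-b}_n) = V^b(z^{-b})$. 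Thus $z^b$ attains the minimum, so $z^b \in \mathrm{BR}_b(z^{-b})$, and $\mathrm{BR}_b$ has closed graph; with compact range this is equivalent to upper semicontinuity. A finite product of upper semicontinuous, closed-valued correspondences is again upper semicontinuous with closed values, so $\Phi$ inherits these properties.

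It then remains to control the topology of the images. Hypothesis (iv) states that each $\mathrm{BR}_b(z^{-b})$ is contractible; since a finite product of contractible sets is contractible, every image $\Phi(z)$ is contractible and in particular acyclic. I would therefore invoke the Eilenberg--Montgomery fixed-point theorem (the extension of Kakutani's theorem in which convexity of the images is weakened to acyclicity/contractibility) on the compact, contractible domain $Z$, concluding that $\Phi$ has a fixed point and hence that a GN equilibrium exists; this is exactly the route taken in~\cite{Park:2015aa}.

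The main obstacle is not the continuity bookkeeping above but the two structural hypotheses that make a fixed-point theorem applicable at all. First, compactness: the action spaces $\Z_b \subset \R^{2N}$ and the constraint images $\mc{M}^b(z^{-b})$ are a priori unbounded, so one must either assume they are compact or truncate to a large compact convex set and verify that no boundary artifact is introduced (in the applications of this theorem in the present paper, the constraints $\bs{\dpara}_s \in \mc{A}_s$ supply this compactness). Second, and more delicate, is the move away from convexity: because the best-response sets need not be convex, Kakutani's theorem does not apply directly, and one must instead rely on a contractible- or acyclic-valued fixed-point theorem, which demands that the ambient domain be an absolute neighborhood retract and that the images be contractible. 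Verifying these topological conditions---rather than any analytic estimate---is the crux of the argument.
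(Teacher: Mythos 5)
There is no in-paper proof to compare against here: the paper states this theorem as an imported result from~\cite{Park:2015aa} and invokes it as a black box in the proof of Theorem~\ref{thm:nash_constrained}. Judged on its own, your proposal is the standard argument for existence results of this type and almost certainly mirrors the cited source: a GN equilibrium is exactly a fixed point of the joint best-response correspondence $\Phi(z)=\prod_{b\in\mc{B}}\mathrm{BR}_b(z^{-b})$; hypotheses (i)--(iii) are precisely what is needed for the closed-graph argument showing each $\mathrm{BR}_b$ is upper semicontinuous with closed values (your chain $L^b(z^b,z^{-b})=\lim_n L^b(z^b_n,z^{-b}_n)=\lim_n V^b(z^{-b}_n)=V^b(z^{-b})$ is valid because the sequence and its limit lie in the graph of $\mc{M}^b$, where (ii) applies); and hypothesis (iv) supplies contractible, hence acyclic, values so that an Eilenberg--Montgomery-type fixed-point theorem can replace Kakutani. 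Your caveat about compactness is the one substantive issue, and it is a defect of the theorem \emph{statement} rather than of your argument: as written, nothing guarantees compactness of $\Z_b$ or of the images $\mc{M}^b(z^{-b})$, yet compactness is needed for nonemptiness and closedness of the argmin, for the equivalence of closed graph and upper semicontinuity, and for the fixed-point theorem itself (compact ANR domain, compact acyclic values). Two small refinements: hypothesis (iv) already forces nonemptiness of each best-response set, since a contractible space is nonempty, so Weierstrass is not needed for that; and in the paper's only invocation of the theorem---the reduced game $\{\tilde{P}_b(\cdot)\}_{b\in\mc{B}}$ in Theorem~\ref{thm:nash_constrained}---the compactness gap is harmless, since each feasible set there is contained in the box $\prod_{s\in\mc{S}}[0,\bar{\bs{\dpara}}_s]$ and the best responses are shown to be singletons, so both compactness and contractibility hold trivially.
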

 
 \subsection{Price of Anarchy with Bounded Effort Spaces}
 \label{sec:poa_bounded_case}
 \begin{theorem}
 \label{thm:always_lose2}
Suppose $\gamma_i^j> 0$,  $\forall i \in \mc{S}$, $\forall j \in \mc{B}$. Further suppose that $\mc{E}_s = [0,e_s^{max}]$ and that $\bs{\gamma}_s \in [\underline{\bs{a}}_s,\overline{\bs{a}}_s)$ for each $s \in \mc{S}$.
Then, there exists a GN equilibrium to the game $\{P^b{ (\cdot)} \}_{\buy\in
    \mc{B}}$ between the {\buyers} for which the price of anarchy is equal to one if and only if
for each $j \in \mc{B}$ and each $i,l\in \mc{S}$ such that $i \neq l$ we have that $\xi_{i,l}^j =0$.
\end{theorem}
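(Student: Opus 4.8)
The plan is to adapt the proof of Theorem~\ref{thm:always_lose} to the effort-bounded setting, where the two substantive differences are that a GN equilibrium is now guaranteed to exist and that the socially optimal profile must be verified to lie in the interior of each $\mc{E}_\src$. First I would invoke Theorem~\ref{thm:nash_constrained} to conclude that a GN equilibrium always exists, so the trivial branch of the unbounded proof (no equilibrium) never occurs here. Next I would pin down the socially efficient profile: by Lemma~\ref{lemma:convex_social} the cost $\mc{L}(\bs{e}) = \sum_{i \in \mc{S}}\bs{\gamma}_i \sigma_i^2(e_i) + \sum_{i \in \mc{S}}e_i$ is strictly convex, and its stationarity condition~\eqref{eq:social_effort} is exactly~\eqref{eq:effort_d} with $\bs{a}_\src$ replaced by $\bs{\gamma}_\src$, so the unconstrained minimizer is $\hat{e}_\src = \mu_\src(\bs{\gamma}_\src)$. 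The hypothesis $\bs{\gamma}_\src \in [\underline{\bs{a}}_\src, \overline{\bs{a}}_\src)$ together with the strict monotonicity of $\mu_\src$ (Lemma~\ref{lemma:increasing_effort}) gives $\hat{e}_\src \in [0, e_\src^{\max})$, so this interior point is the constrained minimizer. Since at any GN equilibrium each source solves~\eqref{eq:effort_d} and hence exerts $\mu_\src(\bs{a}_\src)$, injectivity of $\mu_\src$ yields the key reduction: a GN equilibrium attains $\PoA = 1$ if and only if $\bs{a}_\src = \bs{\gamma}_\src$ for every $\src \in \mc{S}$.

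For the ``if'' direction, suppose every off-diagonal $\xi_{i,l}^j$ vanishes. Then the interior branch of the best-response map~\eqref{eq:br_const} collapses to $\dpara_\src^\buy = \gamma_\src^\buy$, so the profile in which every buyer offers $\dpara_\src^\buy = \gamma_\src^\buy$ produces $\bs{a}_\src = \bs{\gamma}_\src$; because the associated aggregate $T_\src^\buy = \bs{\gamma}_\src$ lies in $\mc{A}_\src$ (using $\bs{\gamma}_\src < \overline{\bs{a}}_\src$), this interior branch is the active one, the profile is a genuine GN equilibrium, and by the reduction it has $\PoA = 1$.

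For the ``only if'' direction I would argue by contradiction, mirroring the final step of Theorem~\ref{thm:always_lose}. Suppose some $\xi_{l,\src}^j > 0$ with $l \neq \src$, yet a GN equilibrium attains $\PoA = 1$, so $\bs{a}_\src = \bs{\gamma}_\src$ for all $\src$. Since $\bs{a}_\src = \bs{\gamma}_\src < \overline{\bs{a}}_\src$, no source triggers the upper clamp (third branch of~\eqref{eq:br_const}); in the interior regime the best responses $\dpara_\src^\buy = \gamma_\src^\buy + \sum_{j \in -\buy}\sum_{m \in -\src}\dpara_m^j \xi_{m,\src}^j$ hold, and summing over $\buy \in \mc{B}$ gives $\bs{a}_\src = \bs{\gamma}_\src + (M-1)\sum_{j \in \mc{B}}\sum_{m \in -\src}\dpara_m^j \xi_{m,\src}^j$. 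In equilibrium each interior best response obeys $\dpara_m^j \geq \gamma_m^j > 0$, so once any off-diagonal $\xi$ is positive the double sum is strictly positive and $\bs{a}_\src > \bs{\gamma}_\src$, contradicting $\PoA = 1$.

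I expect the main obstacle to be the left-endpoint case $\bs{\gamma}_\src = \underline{\bs{a}}_\src$, where the socially optimal effort is zero and the lower clamp (first branch of~\eqref{eq:br_const}) could a priori be active, so the clean summation above must be justified rather than assumed for every source simultaneously. The way through is to note that the coupling contribution to $T_\src^\buy$ is non-negative and $\gamma_\src^\buy > 0$, so positive coupling can only push a buyer's desired aggregate \emph{above} $\underline{\bs{a}}_\src$; hence the interior branch stays feasible at the sources that actually receive positive coupling, and the contradiction persists. It is worth emphasizing that the strict upper bound $\bs{\gamma}_\src < \overline{\bs{a}}_\src$ in the hypothesis is precisely what excludes the ``runaway'' behavior—here an equilibrium pinned at $\overline{\bs{a}}_\src$—that would otherwise mask the inefficiency; this role is played in the unbounded analysis by the spectral-radius condition $\rho(\Xi) < 1$ rather than by a boundedness assumption.
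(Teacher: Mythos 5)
Your proposal is correct and takes essentially the same approach as the paper: reduce $\PoA = 1$ to $\bs{\dpara}_\src = \bs{\gamma}_\src$ for all $\src$ (via strict convexity of $\mc{L}$ and injectivity of $\mu_\src$ on $\mc{A}_\src$), use the hypothesis $\bs{\gamma}_\src < \overline{\bs{\dpara}}_\src$ to rule out the upper clamp so that the interior branch of \eqref{eq:br_const} forces $\dpara = \Amat\dpara + \gamma$, and then derive the same contradiction as in Theorem~\ref{thm:always_lose}. Your treatment of the left endpoint $\bs{\gamma}_\src = \underline{\bs{\dpara}}_\src$ is actually more careful than the paper's (which simply asserts $\underline{\bs{\dpara}}_\src < \bs{\dpara}_\src < \overline{\bs{\dpara}}_\src$ ``by assumption'' although its hypothesis permits equality on the left), though the cleanest way to close that case is not to argue the interior branch must be active---the lower clamp \emph{can} be active at an equilibrium with $\bs{\dpara}_\src = \underline{\bs{\dpara}}_\src$---but to note that the clamp is triggered precisely when $\dpara_\src^\buy > \gamma_\src^\buy + \sum_{j\in-\buy}\sum_{l\in-\src}\dpara_l^j\xi_{l,\src}^j$, so the inequality $\dpara_\src^\buy \geq \gamma_\src^\buy + \sum_{j\in-\buy}\sum_{l\in-\src}\dpara_l^j\xi_{l,\src}^j$ holds on every branch and your summation bound, hence the contradiction, survives unchanged.
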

\begin{proof}
In the case where $\mc{A}_s$ is bounded $\forall s \in \mc{S}$, we
make an analogous argument as was made in Theorem \ref{thm:always_lose}. Suppose that
$\bs{\dpara}_s = \bs{\gamma}_s,  \forall s \in \mc{S}$ as is needed for the socially optimal solution. Then, by assumption, for each $s\in \mc{S}$, we have $\underline{\bs{\dpara}}_s < \bs{\dpara}_s < \overline{\bs{\dpara}}_s$,
and thus, it must be true that in equilibrium the $a$--parameters  are given
by an equation of the form
$a=\Xi a+\gamma$, since the best response for each
{\buyer} is given by $a_s^b=\gamma_s^b+\sum_{j\in -b}\sum_{l\in -s}a_l^j\xi_{l,s}^j$, 
which is the second option in \eqref{eq:br_const}. However, again it cannot be the case that $\mc{\gamma}_s =\bs{a}_s$ for each $s \in \mc{S}$ if $\xi_{i,l}^j \neq$ for some $j \in \mc{B}$ and $i,l \in \mc{S}$ such that $i \neq l$.
\end{proof}

\bibliographystyle{IEEEtran}
\bibliography{root}

\begin{IEEEbiography}[{\includegraphics[width=1in,height=1.25in,clip,keepaspectratio]{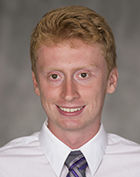}}]{Tyler~Westenbroek} is a graduate student at the University of California, Berkeley, pursuing a Ph.D. in Electrical Engineering and Computer Science. He graduated, with top honors, from Washington University in Saint Louis, receiving a B.S.~(2016) with majors in Systems Engineering and Computer Science. His interests lie primarily in the areas of Hybrid Systems, Optimal Control, and Optimization.
\end{IEEEbiography}
\begin{IEEEbiography}[{\includegraphics[width=1in,height=1.25in,clip,keepaspectratio]{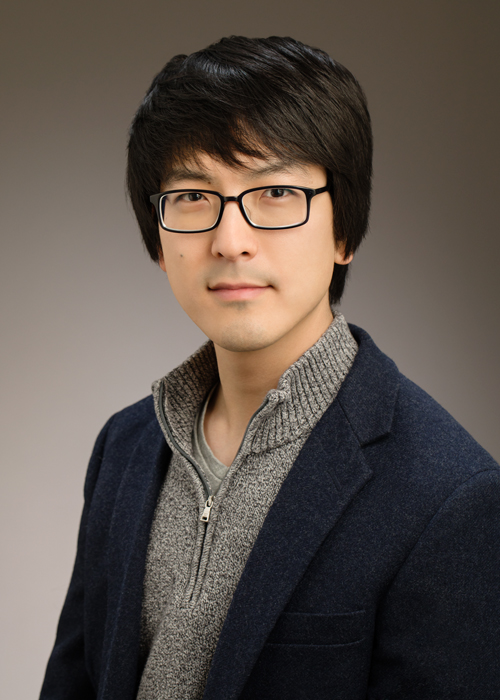}}]{Roy~Dong}
    is a Research Assistant Professor in the Department of Electrical and Computer Engineering at the University of Illinois at Urbana-Champaign. He was a postdoctoral researcher and visiting lecturer at University of California, Berkeley from 2017 to 2018, where he also received his Ph.D. in Electrical Engineering and Computer Sciences in 2017. Prior to his graduate studies, he received a B.S. Honors in Economics and a B.S. Honors in Computer Engineering from Michigan State University in 2010. He is the recipient of the National Science Foundation Graduate Research Fellowship (2011).
\end{IEEEbiography}
\begin{IEEEbiography}[{\includegraphics[width=1in,height=1.25in,clip,keepaspectratio]{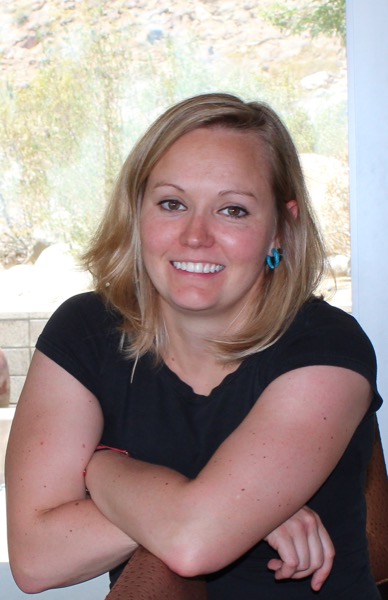}}]{Lillian
    J.~Ratliff} (S'08--M'15)
 is an Assistant Professor in Electrical Engineering (EE) at the University of Washington, Seattle. Prior
to joining UW she was a Postdoctoral Researcher in Electrical Engineering and
Computer Sciences at the University of California, Berkeley
where she also obtained her Ph.D.~in 2015. She obtained a B.S.~in Mathematics
(2008) and a B.S.~(2008) and M.S.~(2010) in EE all from the University of Nevada, Las Vegas.  Her research interests lie at the intersection of game theory, optimization, and
learning. She is the recipient of the National Science
Foundation Graduate Research Fellowship (2009) and the CISE Research Initiation
Initiative Award (2017).
\end{IEEEbiography}
\begin{IEEEbiography}[{\includegraphics[width=1in,height=1.25in,clip,keepaspectratio]{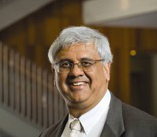}}]{S.~Shankar
    Sastry} (S'79--M'80--SM'95--F'95)
is with the University of California, Berkeley where he is faculty
director of the Blum Center for Developing Economies. He served as the Dean of Berkeley's College of Engineering from 2007-2018. He received his B.~Tech
from the Indian Institute of Technology, and M.S., M.A.~(Math), Ph.D.~in
Engineering from Berkeley.  He has served on the faculties of MIT and Harvard.
His areas of personal research are design of resilient network control systems,
 autonomous systems, computer vision,
nonlinear and adaptive control, and hybrid and embedded systems. 
He is also a member of the National Academy of Engineering and the
American Academy of Arts and Sciences. He received an honorary M.A.~from
Harvard and honorary doctorates from the Royal Swedish Institute of
Technology and the University of Waterloo.  He has been a member of the Air
Force Scientific Advisory Board and the Defense Science Board, among other
national boards. He is a member of the UN Secretary General's Scientific Advisory Board. 
He has coauthored over 550 technical papers and 9 books.  He has supervised over 85 doctoral students and over 50 MS students. 
His students now occupy leadership roles in several places and on the faculties of many major universities in the United States and abroad.
\end{IEEEbiography}
\end{document}